\newtheorem{theorem}{Theorem}
\newtheorem{definition}{Definition}
\newcommand{\evcn}{\alpha^{\infty}}
\newcommand{\mvc}{\alpha}
\title{(Eternal) Vertex Cover Number of Infinite and Finite Grid Graphs}
\author{Tiziana Calamoneri}
\affil{Università di Roma La Sapienza, Roma, Italy\\
\texttt{calamo@di.uniroma1.it}}
\author{Federico Cor\`{o}}
\affil{University of Perugia, Italy\\
\texttt{federico.coro@unipg.it}}
\date{}
\begin{document}

\maketitle

\begin{abstract}
In the eternal vertex cover problem, mobile guards on the vertices of a graph are used to defend it against an infinite sequence of attacks on its edges by moving to neighbor vertices. The eternal vertex cover problem consists in determining the minimum number of necessary guards. 
Motivated by previous literature, in this paper, we study the vertex cover and eternal vertex cover problems on regular grids,  when passing from infinite to finite version of the same graphs, and we provide either coinciding or very tight lower and upper bounds on the number of necessary guards. 
To this aim, we generalize the notions of minimum vertex covers and minimum eternal vertex cover in order to be well defined for infinite grids.
\end{abstract}


\section{Introduction}

The eternal vertex cover problem can be described in terms of a two-player game and models a problem associated with defending the vertices of a given graph $G$ against a sequence of attacks:
at the beginning of the game, the defender (which controls some guards lying on vertices of $G$) chooses a placement of guards on a subset of vertices of \(G\), defining an initial configuration.
Subsequently, in each round of the game, the attacker attacks an edge of their choice. To repel an attack, a guard from an incident vertex moves across the attacked edge to defend it; at the same time, other guards may either remain where they are or move to neighbor vertices.
If the defender is able to do this, then the attack has been successfully defended and the game proceeds to another round with the attacker choosing the next edge to attack. 
Otherwise, the attacker wins. 

Graph protection has its historical roots in the time of the ancient Roman Empire~\cite{klostermeyer2016protecting} and currently finds application in network security and redundant file storage handling.

If the defender is able to keep defending against any infinite sequence of attacks on \(G\) with \(k\) guards, then we say that there is a {\em defense strategy on \(G\) with \(k\) guards}. 
This strategy requires that the set of vertices containing guards is a vertex cover before and after each round. 
In this case, the set of positions of all the guards in any round of the game defines a configuration, referred to as an {\em eternal vertex cover of \(G\) of size \(k\)}.

The {\em eternal vertex cover problem} on a graph $G$ requires to find the minimum value of $k$ for which a defense strategy on \(G\) with \(k\) guards exists. Such minimum is denoted by $\evcn (G)$.

\bigskip

{\em Literature.}
The problem was first formulated in 2009 by Klostermeyer and Mynhardt~\cite{klostermeyer2009edge} and, in the same paper, it was proved that, for any graph $G$, \(\mvc(G ) \le \evcn(G ) \le 2\mvc(G )\), where $\mvc(G)$ is the cardinality of a minimum vertex cover for $G$.

The problem  has been deeply studied  from a computational complexity point of view:
deciding whether \(k\) guards can protect all the edges of a graph is NP-hard~\cite{fomin2010parameterized} 
and remains NP-hard even for internally triangulated planar graphs~\cite{babu2019new}; moreover, it is APX-hard and there is a 2-approximation algorithm; finally, exact (exponential) algorithms are given.

Nevertheless, there are a number of special graphs for which the problem can be exactly solved: 
trees~\cite{klostermeyer2009edge}
(\(\evcn(T)=n- |L(T )| + 1\), where \(L(T )\) is the number of leaves of \(T\)),
cacti~\cite{babu2020linear},
simple generalizations of trees~\cite{araki2015eternal}
({\em i.e.} graphs constructed by replacing each edge of a tree by an arbitrary elementary bipartite graph or by an arbitrary clique),
cycles on $n$ vertices~\cite{klostermeyer2009edge}
(\(\evcn(C_n)=\mvc(C_n)=\lceil\frac{n}{2} \rceil\)),
chordal graphs~\cite{babu2019new}
and $n \times m$ grids~\cite{klostermeyer2009edge}
(whose $\evcn$ is \(\frac{nm}{2} = \mvc\) if \(nm\) is even and is \(\lceil\frac{nm}{2}\rceil = \mvc+1\) if \(n,m>1\) are odd with \(n\ge m\)).
More in general, in~\cite{klostermeyer2016protecting,klostermeyer2009edge}, some conditions to have \(\evcn = \mvc\) are provided and it seems particularly interesting to understand in which cases these two parameters are very close.

Finally, there are a number of papers ({\em e.g.}~\cite{babu2019new,anderson2012mortal,anderson2014graphs,klostermeyer2011graphs}) connecting $\mvc$ and $\evcn$ of special graphs with other parameters, such that the (eternal) domination number, the vertex connectivity number, and the minimum cardinality of a vertex cover that contains all cut vertices.

\bigskip

{\em Our Work.}
This study arises from observing that, while any eternal vertex cover of an infinite path must puts guards alternately on its vertices, for a finite $n$ vertex path $P_n$, $\evcn(P_n)=n-1$~\cite{klostermeyer2009edge}.

On the other hand, it is immediate to see that any eternal vertex cover of the infinite squared grid puts guards alternately on its vertices; this is true also in the finite case, indeed,
as we have already stated, 
from the literature we know that a squared grid
({\em i.e.} the Cartesian product of two paths) has $\evcn$ roughly equal to the half of its number of vertices.

So, not all the graphs behave in the same way when passing from the infinite to the finite version.
We wonder which is the behavior of some naturally infinite graphs, that is whether we get an increase of their eternal vertex cover number when we reduce to a finite portion.

To this aim, we first generalize the notions of minimum vertex cover and minimum eternal vertex cover in order to make them well defined for infinite grids.
Then we consider infinite regular grid graphs.
For each of them, we evaluate on which portion of vertices it is necessary to put a guard;
moreover we consider a finite (rectangular) portion of such graphs and study the value of their $\evcn$. 
Finally we compare our results in the infinite and finite case.
The results of this paper are summarized in Tables~\ref{tab:infinite} and~\ref{tab:finite}.

It is worth noting that, to the best of our knowledge, in the literature, no generalizations are known for the notions of minimum vertex cover and eternal vertex cover of infinite graphs.

\begin{table}[ht]
\centering
\begin{tabular}{|l||c|c|}
\hline
Infinite graphs & $\rho$          & $\rho^{\infty}$ \\
\hline \hline
Path     &    $=1/2$              &$=1/2$    \\
Squared grid & $=1/2$ (Th.~\ref{th.vc_infiniteT3})   & $=1/2$  (Th.~\ref{th.infinite_EVC_T3_T4_T6})              \\
Hexagonal grid  & $=1/2$ (Th.~\ref{th.vc_infiniteT3})          & $=1/2$ (Th.~\ref{th.infinite_EVC_T3_T4_T6})                  \\
Triangular grid & 
$= 2/3$ (Th.~\ref{th.vc_infiniteT6}) & $= 2/3$ (Th.~\ref{th.infinite_EVC_T3_T4_T6})       \\
Octagonal grid & 
$= 3/4$ (Th.~\ref{th.vc_infiniteT8}) &  $= 3/4$ (Th.~\ref{th.infinite_EVC_T3_T4_T6})       \\
\hline
\end{tabular}
\vspace*{.3cm}
\caption{Summary of the results for infinite grid graphs; the lower part of the table contains the results proved in this paper.}
\label{tab:infinite}
\end{table}

\begin{table}[ht]
\centering
\hspace*{-1cm}
\begin{tabular}{|l||c|c|}
\hline
Finite graphs  & $\rho$          & $\rho^{\infty}$\\
\hline \hline
Path $P_n$ &
$\frac{1}{2}(1-\frac{1}{n}) \leq \rho \leq \frac{1}{2}$~\cite{klostermeyer2009edge} &
=$1-\frac{1}{n}$~\cite{klostermeyer2009edge}     \\
Squared grid 
& $\frac{1}{2} -\frac{1}{2hw} \leq \rho \leq \frac{1}{2}$ 
& 
$ \frac{1}{2} -\frac{1}{2hw} \le \rho \le\frac{1}{2} +\frac{1}{2hw}$~\cite{klostermeyer2009edge} \\
\hline
Hexagonal grid  & $= \frac{1}{2}$ (Th.~\ref{th.finite_vc_T3_T6}) & $=\frac{1}{2}$ (Th.~\ref{th.finite_EVC_T3})              \\
Triangular grid & $\frac{2}{3}-\frac{1}{3w} \leq \rho \leq \frac{2}{3}$ (Th.~\ref{th.finite_vc_T3_T6})          & $\frac{2}{3}-\frac{1}{3w} \leq \rho^{\infty} \leq \frac{2}{3}+\frac{4}{h}+\frac{4}{h^2}$ (Th.~\ref{th.finite_EVC_T6})     \\
Octagonal grid  & $\frac{3}{4} -\frac{1}{4w} \leq \rho \leq \frac{3}{4}$ (Th.~\ref{th.finite_vc_T3_T6}) & $\frac{3}{4}  -\frac{1}{4w}\leq \rho^\infty \leq \frac{3}{4} + \frac{1}{2h} + \frac{1}{4h^2}$ (Th.~\ref{th.finite_EVC_T8})  \\
\hline
\end{tabular}
\vspace*{.3cm}
\caption{Summary of the results for finite grid graphs; the lower part of the table contains results proved in this paper. The results of last two rows related to $\rho^\infty$ holds for $h \ge w$.
}
\label{tab:finite}
\end{table}

\section{Definitions}

In this paper we consider first {\em infinite regular grids}, {\em i.e.} infinite plane graphs in which all faces are identical regular polygons: hexagonal, squared and triangular grids, here denoted by $T_{\Delta}$, where $\Delta$ represents the degree that is equal to 3,4,6, respectively.
Although they are inspired by regular tassellations of the plane, we do not need the measure of edges and angles so we
keep only the underlying graph structure; for completeness, we study also the regular grid of degree 8, called octagonal grid and denoted by $T_8$. 
Moreover, 
in order to make the exposition easier, we embed them on the Euclidean plane and force the vertices of these grids to have integer coordinates, so for every pair of coordinates $(x,y)$, $x,y \in \mathbb{Z}$, there is a vertex.
It follows a formal description of these 4 grids:

\smallskip

    \noindent{\bf hexagonal grid $T_3$} (see Figure~\ref{fig:infinite-grid-hex}): every vertex $(x,y)$ is connected both with $(x, y+1)$ and with $(x, y-1)$; moreover, for every \(x,k \in \mathbb{Z}\), every vertex \((x, 4k)\) is connected with \((x+1, 4k+1)\), every vertex \((x, 4k+1)\) is connected with \((x-1, 4k)\), every vertex \((x, 4k+2)\) is connected with \((x-1, 4k+3)\) and, finally, every vertex \((x, 4k+3)\) is connected with \((x+1, 4k+2)\);

    \noindent {\bf squared grid $T_4$} (see Figure~\ref{fig:infinite-grid-square}): every vertex $(x,y)$ is connected with $(x+1, y)$, $(x-1, y)$, $(x, y+1)$ and $(x, y-1)$;

    \noindent {\bf triangular grid $T_6$} (see Figure~\ref{fig:infinite-grid-tri}): every vertex $(x,y)$ is connected with $(x+1, y)$, $(x-1, y)$, $(x, y+1)$, $(x, y-1)$, $(x-1, y-1)$ and $(x+1, y+1)$;

    \noindent {\bf octagonal grid.} (see Figure~\ref{fig:infinite-grid-octagonal}): every vertex $(x,y)$ is connected with $(x+1, y)$, $(x-1, y)$, $(x, y+1)$, $(x, y-1)$, $(x-1, y-1)$, $(x+1, y+1)$, $(x-1, y+1)$ and $(x+1, y-1)$.

\begin{figure}[ht]
\centering
\subfloat[hexagonal grid\label{fig:infinite-grid-hex}]{
\includegraphics[scale=0.8]{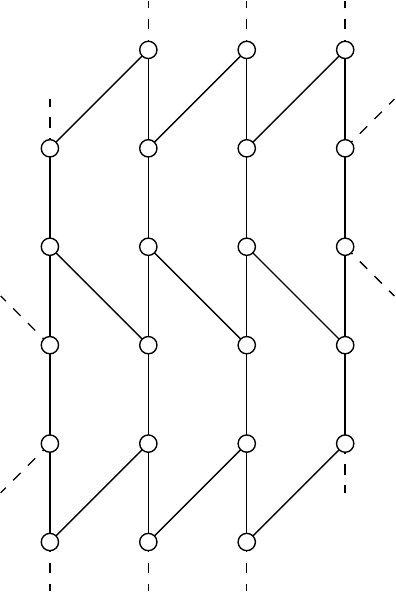}
}
~
\subfloat[squared grid\label{fig:infinite-grid-square}]{
\includegraphics[scale=0.8]{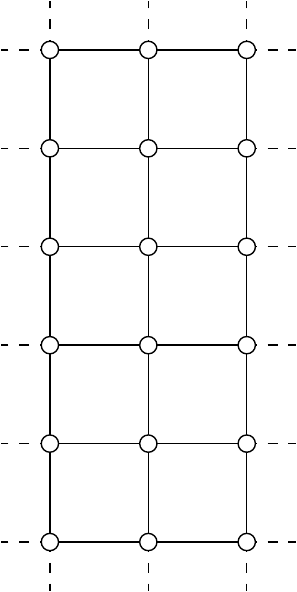}
}~
\subfloat[triangular grid\label{fig:infinite-grid-tri}]{
\includegraphics[scale=0.8]{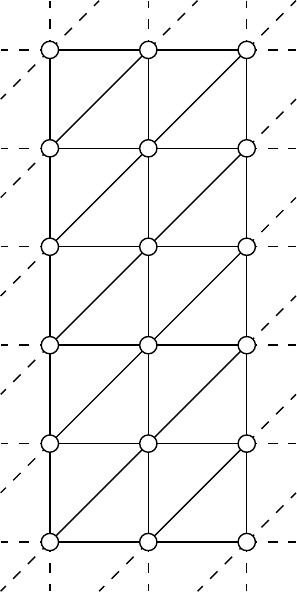}
}~
\subfloat[octagonal grid\label{fig:infinite-grid-octagonal}]{
\includegraphics[scale=0.8]{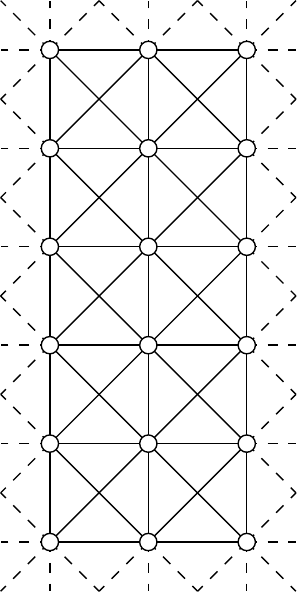}
}
\caption{The four considered infinite regular grid graphs.}
\label{fig:grids}
\end{figure}

Then we study also finite portions of infinite grid graphs, that are induced by the vertices of a set of regular faces contained into a $h \times w$ rectangle with sides parallel to the axes and vertices at integer coordinates, $h, w \in \mathbb{N}$; we will call these subgraphs as {\em finite rectangular grids} and denote them with $T_{\Delta}(h,w)$; 
in the following, we will omit the term ``rectangular'' for brevity.
Clearly, this concept is not well defined for octagonal grids since they are not planar graphs, but it is not difficult to extend it (for example by momentarily removing bottom-right to up-left edges, determining the subgraph according to the definition, and then adding again the removed edges).
In view of their definition, finite grids do not contain degree 1 vertices.
Moreover, we assume the finite grids do not degenerate in simpler structures (and hence, {\em e.g.} the cases $h=1$ and $w=1$ are not allowed in any grid, while the case $h < 4$ is not allowed in the hexagonal grid).

\medskip

We now give a formal definition of the problems we want to study on infinite and finite grids.
It worth noting that, while the following definitions are very well known  in the case of finite graphs, we need to extend them in order to make them working on infinite grids.

\begin{definition}
Given a (either finite or infinite) graph $G=(V,E)$, a {\em vertex cover} for $G$ is a set of vertices $C \subseteq V$ that include at least one endpoint of every edge. 
If $G$ is a {\em finite} graph, the {\em vertex cover problem} consists in finding a vertex cover of minimum cardinality, and this number is denoted with $\mvc (G)$.
If $G$ is an {\em infinite} grid, the {\em vertex cover problem} consists in finding a vertex cover $C$ such that, there exists $n_0 >0$ and, for every $n \geq n_0$ there is a finite rectangular grid of $G$, $G(n,n)$ with vertex set $V(n)$, such that $C \cap V(n)$ is a minimum vertex cover for $G(n,n)$.

In both cases, we will say that a solution of the vertex cover problem is a {\em minimum vertex cover}.
\end{definition}

Given a vertex cover $S$ for $G$, an {\em attack} may occur on a single edge $e = \{u, v\}$, where $u \in S$; a {\em defense} by $S$ to the attack on the edge is a one-to-one function $f : S \rightarrow V$ such that:\\
(1) $f(u) = v$, and \\
(2) for each $s \in S \setminus \{ u \}$, $f(s) \in N[s]$ where $N[s]$ is the close neighborhood of $s$.

Given any vertex $u \in S$, we say that {\em the guard on $u$ shifts to $f(u)$} and, by extension, $S$ {\em shifts to} $S'$ where $S'=\{ f(s) \,\,\mbox{ s.t. } s \in S\}$. 

Since an attack of an edge whose both its extremes contain a guard can always be repelled without changing the configuration of guards (simply swapping the position of the guards on that edge), we only consider attacks on edges with one unguarded vertex.

\begin{definition}
Given a (either finite or infinite) graph $G=(V,E)$, a vertex cover $S$ for $G$ is an {\em eternal vertex cover} if every (possibly infinite) sequence of attacks can be defended, that is if a defense shifts $S$ in $S'$ and $S'$ is an eternal vertex cover. 
If $G$ is a {\em finite} graph, the {\em eternal vertex cover problem} consists in finding an eternal vertex cover of minimum cardinality, and this number is denoted with $\evcn(G)$. 
If $G$ is an {\em infinite} grid, the {\em eternal vertex cover problem} consists in finding an eternal vertex cover $S$ such that there exists $n_0 > 0$ and, for every $n \geq n_0$ there are (possibly coinciding) finite rectangular grids of $G$, $G(n,n)$ and $G'(n,n)$ with vertex set $V(n)$ and $V'(n)$, such that $S \cap V(n)$ and $S' \cap V'(n)$ are minimum vertex covers for $G(n,n)$ and $G'(n,n)$, respectively.

In both cases, we will say that a solution of the eternal vertex cover problem is a {\em minimum eternal vertex cover}.
\end{definition}

In order to compare the results we obtain for finite and infinite graphs, we
introduce the definitions of $\rho$ and $\rho^{\infty}$.

\begin{definition}
Let $G$ be a (finite) graph.
We call $\rho$ the ratio between the number of vertices in a minimum vertex cover and the number of all the vertices of $G$, and $\rho^{\infty}$ the ratio between the number of vertices in an eternal minimum vertex cover and the number of all the vertices of $G$. 

Let $G$ be an infinite grid, $C$ and $S$ be a minimum vertex cover and an eternal minimum vertex cover, respectively, for $G$. For each $n >0$ consider every possible finite rectangular grid $G(n,n)$ and let $V(n)$ be its vertex set; compute the minimum over all finite grids of the ratio between $|V(n) \cap C|$ (respectively $|V(n) \cap S|$) and $|V(n)|=n^2$.
We call $\rho$ (respectively $\rho^{\infty}$) the limit as $n$ goes to $\infty$ of this ratio.
\end{definition}

Intuitively, $\rho$ and $\rho^{\infty}$ represent the fraction of vertices that belong to a minimum cardinality cover.

In order to ease the exposition of the proofs, when handling infinite grids and when the cover is constituted by a pattern that is replicated in all directions along the grid, it is very easy to compute $\rho$ ($\rho^{\infty}$) as the portion of vertices included in the cover w.r.t.\ the total number of vertices in the pattern.
So, for example, if the cover includes alternate vertices along both rows and columns, $\rho$ ($\rho^{\infty}$) will be 1/2. More in general, if the cover includes $i$ vertices out of $i+1$ along both rows and columns, $\rho$ ($\rho^{\infty}$) will be $\frac{i}{i+1}$.
We will exploit this observation for quickly computing many upper bounds on $\rho$ ($\rho^{\infty}$) of infinite grids.

\section{Vertex Cover in (Infinite and Finite) Regular Grids}

In view of the result that \(\mvc \le \evcn \le 2 \mvc\)~\cite{klostermeyer2009edge}, it trivially holds also that $\rho \leq \rho^{\infty} \leq 2 \rho$, so we first determine $\rho$ for the graphs we handle, in order to have immediate lower and upper bounds for $\rho^{\infty}$, and then we directly focus on $\rho^{\infty}$.

\subsection{Infinite Regular Grids}

Preliminarily, observe that all the considered (infinite) grids have regular degree $\Delta$
(3 for the hexagonal grids, 4 for the squared grids, 6 for the triangular grids, 8 for the octagonal grids).
An obvious lower bound on $\rho$ is given by the case in which every edge is covered by exactly one vertex and hence is, in all cases, $\rho \geq 1/2$.

For what concerns the hexagonal and squared grid, this trivial lower bound on $\rho$ easily matches with the upper bound obtained by including in the vertex cover {\em e.g.} all the vertices at even $y$-coordinate (see Fig.~\ref{fig:rho-infinite-hex}) and all the vertices at even sum of coordinates (see Fig.~\ref{fig:rho-infinite-square}), respectively, so proving the following result:

\begin{theorem}
\label{th.vc_infiniteT3}
$\rho(T_3)=\rho(T_4)=\frac{1}{2}$.
\end{theorem}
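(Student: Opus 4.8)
The plan is to prove the two equalities $\rho(T_3) = \rho(T_4) = \tfrac12$ by establishing a matching lower and upper bound in each case, working directly from the definition of $\rho$ for infinite grids (the limit, as $n \to \infty$, of the minimum over all finite rectangular subgrids $G(n,n)$ of $\frac{|V(n)\cap C|}{n^2}$).

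\textbf{Lower bound.} For the lower bound $\rho \geq \tfrac12$, the key observation is purely edge-counting. In any finite grid $G(n,n)$ with vertex set $V(n)$ and edge set $E(n)$, a vertex cover $C$ must cover every edge, and each vertex covers at most $\Delta$ edges (where $\Delta = 3$ for $T_3$ and $\Delta = 4$ for $T_4$); hence $|C \cap V(n)| \cdot \Delta \geq |E(n)|$. Since the grid is $\Delta$-regular up to boundary effects, $|E(n)|$ is $\tfrac{\Delta}{2} n^2$ up to a term of order $n$ (the boundary deficiency). Dividing through, $\frac{|C \cap V(n)|}{n^2} \geq \tfrac12 - O(1/n)$, and taking $n \to \infty$ gives $\rho \geq \tfrac12$. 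I would phrase this cleanly using the already-stated remark in the excerpt that ``an obvious lower bound on $\rho$ is given by the case in which every edge is covered by exactly one vertex,'' so this half is essentially immediate.

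\textbf{Upper bound.} For the upper bound, I exhibit explicit vertex covers achieving density $\tfrac12$, exactly as the excerpt suggests. For $T_4$, take $C = \{(x,y) : x + y \text{ even}\}$; this is a vertex cover because every edge of $T_4$ joins $(x,y)$ to a neighbor differing by $1$ in exactly one coordinate, so the two endpoints have opposite parity of $x+y$, and exactly one lies in $C$. For $T_3$, take $C = \{(x,y) : y \text{ even}\}$; I must check against the edge set given in the formal definition — the vertical edges $(x,y)$–$(x,y\pm1)$ always have one endpoint of even $y$, and each ``slanted'' edge in the definition connects a vertex at $y$-level $4k$ (even) to one at $4k+1$ (odd), $4k+1$ to $4k$, $4k+2$ (even) to $4k+3$, $4k+3$ to $4k+2$ — in every case one endpoint has even $y$-coordinate. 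So $C$ is a vertex cover. In both cases $C$ is a periodic pattern covering exactly half the vertices, and by the observation recorded just before the theorem (density of a replicated pattern), restricting $C$ to a suitably aligned $G(n,n)$ gives $\frac{|V(n)\cap C|}{n^2} \to \tfrac12$, so $\rho \leq \tfrac12$. Combined with the lower bound, $\rho(T_3)=\rho(T_4)=\tfrac12$.

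\textbf{Main obstacle.} The only genuinely delicate point is bookkeeping around the definition rather than any real mathematical difficulty: the definition of $\rho$ for infinite grids demands that $C \cap V(n)$ be a \emph{minimum} vertex cover for $G(n,n)$ for all large $n$ (for at least one choice of $G(n,n)$), not merely that the density tends to $\tfrac12$. So I need the lower-bound argument to apply to $G(n,n)$ itself (certifying that no smaller vertex cover exists, at least asymptotically, or exactly for the even-$n$ case where the bipartite-type bound is tight) and I need to choose the finite window $G(n,n)$ so that its induced subgraph has a perfect or near-perfect independent-set/vertex-cover split matching the pattern $C$ — e.g. aligning the rectangle on the parity classes so that the boundary does not spoil optimality. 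I expect this to be handled by picking $n$ even and the rectangle aligned with the pattern, making $C\cap V(n)$ exactly one side of the bipartition and hence a minimum vertex cover by König's theorem (the grids are bipartite for $\Delta=3,4$); the $T_3$ case needs a short check that the relevant finite subgrid is bipartite with the even-$y$ vertices forming one color class. This alignment argument is routine but is the step I would write most carefully.
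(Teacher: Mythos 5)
Your proposal is correct and follows essentially the same route as the paper, which derives the lower bound $\rho\ge\frac12$ from the observation that each cover vertex accounts for at most $\Delta$ of the roughly $\frac{\Delta}{2}n^2$ edges, and obtains the matching upper bound from exactly the two periodic covers you describe (vertices at even $y$-coordinate for $T_3$, vertices with even coordinate sum for $T_4$). Your additional care about aligning the finite window $G(n,n)$ so that $C\cap V(n)$ is genuinely a minimum vertex cover is in fact more attention to the definition than the paper itself gives, which states the theorem after only this two-line observation.
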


Concerning the triangular grid, a different value can be proved:

\begin{figure}[ht]
\centering
\subfloat[Hexagonal grid\label{fig:rho-infinite-hex}]{
\includegraphics[scale=0.8]{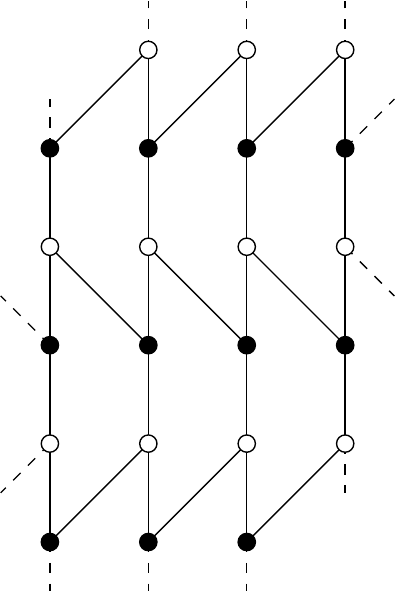}
}~
\subfloat[Squared grid\label{fig:rho-infinite-square}]{
\includegraphics[scale=0.8]{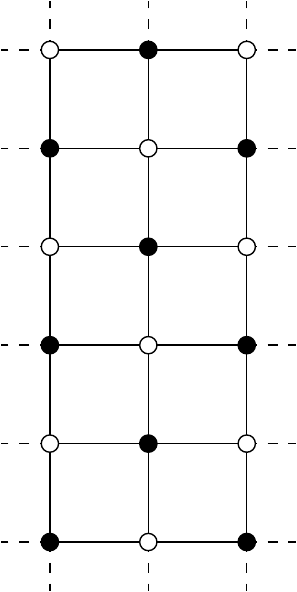}
}~
\subfloat[Triangular grid\label{fig:rho-infinite-tri}]{
\includegraphics[scale=0.8]{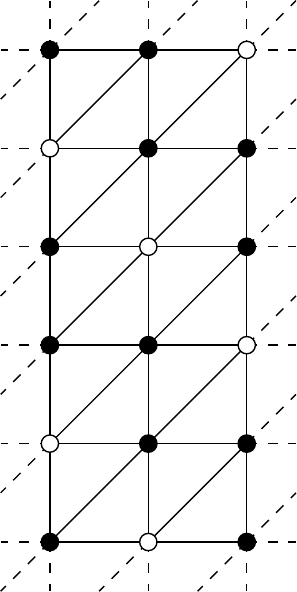}}
~
\subfloat[Octagonal grid\label{fig:rho-infinite-octa}]{
\includegraphics[scale=0.8]{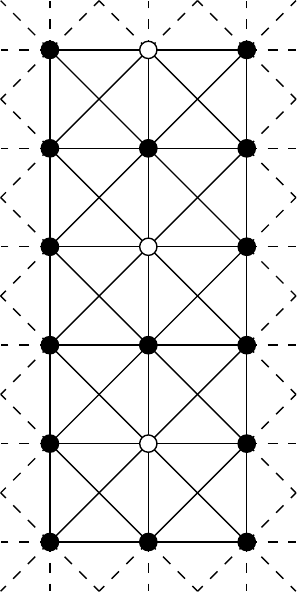}}
\caption{Examples of minimum vertex covers in infinite regular grids (black vertices are within the vertex cover).
}
\end{figure}

\begin{theorem}
\label{th.vc_infiniteT6}
$\rho(T_6)=\frac{2}{3}$.
\end{theorem}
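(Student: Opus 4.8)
The plan is to prove the two inequalities $\rho(T_6)\le\tfrac{2}{3}$ and $\rho(T_6)\ge\tfrac{2}{3}$ separately; the second improves on the trivial bound $\rho\ge\tfrac12$ and is where the work lies. For the upper bound I would exhibit the explicit vertex cover $C=\{(x,y):x+y\not\equiv 0\pmod 3\}$ (a cover of the kind drawn in Fig.~\ref{fig:rho-infinite-tri}). Its complement $I=\{(x,y):x+y\equiv 0\pmod 3\}$ is independent, because along every edge of $T_6$ the quantity $x+y$ changes by $1$ (horizontal and vertical edges) or by $2$ (the diagonal edge $(x,y)$--$(x+1,y+1)$), and neither $1$ nor $2$ is divisible by $3$; hence $C$ covers every edge. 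Since $C$ is the $3$-periodic pattern that takes $2$ out of every $3$ vertices along the grid, it has density $\tfrac{2}{3}$, so $\rho(T_6)\le\tfrac{2}{3}$.

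For the lower bound the key observation is that the vertex set of $T_6$ partitions into vertex-disjoint triangles ($3$-cliques). One such partition is formed by the triples $\{(x,y),(x+1,y),(x+1,y+1)\}$ --- each a triangle of $T_6$, carrying its horizontal, vertical and diagonal edge --- where $(x,y)$ ranges over the index-$3$ sublattice $L=\mathbb{Z}(1,2)+\mathbb{Z}(2,1)$; since $(0,0),(1,0),(1,1)$ lie in three distinct cosets of $L$ (equivalently, none of $(1,0),(1,1),(0,1)$ lies in $L$), every point of $\mathbb{Z}^2$ belongs to exactly one translate of $\{(0,0),(1,0),(1,1)\}$ by $L$. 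Now let $C$ be an arbitrary vertex cover of $T_6$ and let $G(n,n)$ be a finite $n\times n$ grid with vertex set $V(n)$. All but $O(n)$ of the $n^2$ vertices of $V(n)$ lie in a tile of this partition that is contained entirely in $G(n,n)$, so $G(n,n)$ contains $\tfrac{n^2}{3}-O(n)$ pairwise-disjoint triangles, and any vertex cover must take at least $2$ vertices from each of them; hence $|C\cap V(n)|\ge \tfrac{2}{3}n^2-O(n)$, and dividing by $n^2$ and letting $n\to\infty$ yields $\rho(T_6)\ge\tfrac{2}{3}$. (Equivalently one may argue fractionally: weighting each triangle $\{(x,y),(x+1,y),(x+1,y+1)\}$ by $\tfrac13$ puts total weight $1$ at every vertex, while an independent set meets each such triangle in at most one vertex, forcing its density to be at most $\tfrac13$.)

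The combinatorial core here --- the triangle tiling via the sublattice $L$ and the edge check for independence --- is routine. I expect the genuine subtlety to be in squaring these density statements with the paper's formal definition of $\rho$ for infinite grids, which is a minimum over all finite rectangular grids $G(n,n)$ followed by a limit; in particular one must verify that the periodic cover $C$ above really restricts to a \emph{minimum} vertex cover on infinitely many finite grids, which forces the argument to lean on the companion finite bound, Theorem~\ref{th.finite_vc_T3_T6}.
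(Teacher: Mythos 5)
Your proposal is correct, and the lower bound takes a genuinely different route from the paper. The upper bound is essentially the paper's: the cover the paper exhibits (excluding the residues $(0,1),(1,0),(2,2)\bmod 3$) is exactly the set $\{(x,y): x+y\not\equiv 1\pmod 3\}$, a translate of yours, and the independence check via $x+y\bmod 3$ is the same observation. For the lower bound, however, the paper runs a row-by-row case analysis: it considers a maximal run of $i$ consecutive cover vertices on a row, flanked by vertices outside the cover, and counts how many vertices of the row immediately below are forced into the cover, splitting into the cases $i=1$, $i=2$ and $i\ge 3$; the density $\tfrac23$ emerges as the worst case ($i=2$), which also hands the authors the extremal periodic configuration. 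Your argument instead tiles $\mathbb{Z}^2$ by vertex-disjoint $3$-cliques indexed by the index-$3$ sublattice $L=\mathbb{Z}(1,2)+\mathbb{Z}(2,1)$ and uses the fact that any vertex cover must contain at least two vertices of each triangle; this is shorter, avoids the case analysis entirely, and makes the boundary accounting (only $O(n)$ vertices of an $n\times n$ window fail to lie in a fully contained tile) transparent, at the cost of not directly exhibiting which local pattern is tight. Both arguments share the same looseness with respect to the paper's formal definition of $\rho$ for infinite grids, which asks that the chosen $C$ restrict to genuinely \emph{minimum} covers of finite windows; you flag this correctly, and the paper does not resolve it any more rigorously than you do, so it is not a gap attributable to your proof.
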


\begin{proof}
First we prove the lower bound by showing that, for any finite rectangular grid $T_6(n,n)$, any vertex cover must contain at least 2/3 of the vertices, {\em i.e.} $C \cap V(n) \geq 2/3$.

Consider a minimum vertex cover $C$ and any row of the triangular grid whose vertices (at $y$-coordinate $\bar{y}$) induce an infinite path $P$; observe that it is not possible that two consecutive vertices of $P$ lie outside $C$ because the complement of every vertex cover must be an independent set.

Let $P'$ be the subgraph induced by the vertices at $y$-coordinate $\bar{y}-1$ ({\em i.e.} the row immediately under $P$). 
In the following we will call two vertices with the same name adding to one of them the apex symbol to mean that they lie one on $P$ and the other one on $P'$ and have the same $x$-coordinate.

Refer to Figure~\ref{fig:pattern}.
We proceed by considering $i \geq 1$ consecutive vertices on $P$, $v_1, \ldots v_i$ that lie in $C$ and are at $x$-coordinate $\bar{x}+1, \ldots , \bar{x}+i$; 
the previous and next vertices, $u_2$ and $w_2$, at $x$-coordinate $\bar{x}$ and $\bar{x}+i+1$, respectively do not;
finally, the further previous and next vertices, $u_1$ and $w_1$, at $x$-coordinate $\bar{x}-1$ and $\bar{x}+i+2$, respectively, are again in $C$.
We will show that, for any $i \geq 1$, the corresponding finite portion of $T_6$ with $2(i+4)$ vertices necessarily has at least 2/3 of its vertices in $C$.

\begin{figure}[ht]
\hspace*{-.5cm}
\subfloat[]{
\includegraphics[scale=0.65]{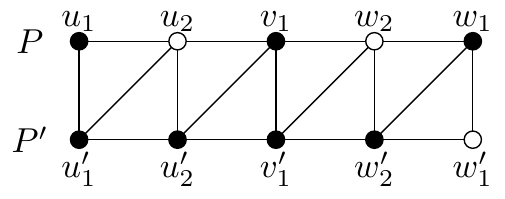}\label{fig:pattern.a}
}~~~~
\subfloat[]{
\includegraphics[scale=0.65]{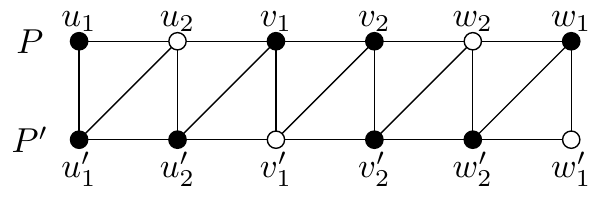}\label{fig:pattern.b}
}
\subfloat[]{
\includegraphics[scale=0.65]{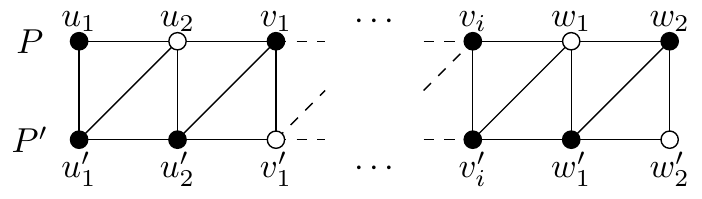}\label{fig:pattern.c}
}
\caption{Possible cases in the proof of Theorem~\ref{th.vc_infiniteT6}.}
\label{fig:pattern}
\end{figure}

If $i=1$ (see Fig.~\ref{fig:pattern.a}),
in view of this configuration, necessarily $u_1'$, $u_2'$, $v'$ and $w_2'$ must lie in $C$ in order to guarantee complete coverage of the edges. 
Looking at the considered 10 vertices, it follows that at least $7/10>2/3$ of the vertices are in $C$.
It is worth noting that if we look at a different subset of vertices ({\em e.g.} vertices from $u_2$ to $w_1$ and from $u'_2$ to $w'_2$) we obtain a fraction lower than 2/3 (in the example, $5/8 < 2/3$); nevertheless, the infinity of the grid guarantees that all the 10 vertices are present, and the correctness of the reasoning follows.

If $i=2$ (see Fig.~\ref{fig:pattern.b}), it holds that $u_1$, $v_1, v_2$ and $w_1$ are in $C$, while $u_2$ and $w_2$ are outside it.
On $P'$, $u_1', u_2', v_2'$ and $w_2'$ must necessarily be in $C$. So at least 8 out of 12 ({\em i.e.} 2/3) of the vertices are in $C$.

For any $i \geq 3$ (see Fig.~\ref{fig:pattern.c}),on $P'$, $u_1', u_2'$, $w_2'$ and $v_i'$ must lie in $C$.
Moreover, observe that $i-2$ horizontal edges remain uncovered, so at least further $\lceil (i-2)/2\rceil$ vertices must enter in $C$.
Looking at the $2(i+4)$ considered vertices, at least $i+6+\lceil (i-2)/2\rceil$ vertices must lie in $C$, leading to a ratio $\geq 3/4 -7/20 > 2/3$.

Observe that this reasoning leads to the result that $\rho(T_6) \geq 2/3$ indeed,
consider any rectangular grid $T_6(n,n)$ with $n \geq 5$ and let $\bar{\imath}$ the minimum integer such that $T_6(n,n)$ contains such a subgraph with $2(\bar{i}+4)$ vertices; if $\bar{\imath}$ exists finite, then clearly $\rho(T_6(n,n)) \geq 2/3$; if, on the contrary, $\bar{\imath}$ does not exist, then it means that each row (except the last one) contains at least $n-2$ vertices of $C$ and anyway $\rho(T_6(n,n)) \geq 2/3$.
Since the case $i=2$ achieves the lowest ratio, the previous proof is constructive and allows us to design a configuration for which this lower bound is achieved: we include in the vertex cover all vertices except the ones at coordinates $(x,y)$ s.t.\ if $x \mod 3=0$ then $y=3k+1$, if $x \mod 3=1$ then $y=3k$ and if $x \mod 3=2$ then $y=3k+2$ for each $k \in \mathbb{Z}$  (see Fig.~\ref{fig:rho-infinite-tri}), so proving the statement.
\end{proof}

\begin{theorem}
\label{th.vc_infiniteT8}
$\rho(T_8)=\frac{3}{4}$.
\end{theorem}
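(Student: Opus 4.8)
The plan is to use that $T_8$ is the king's graph on $\mathbb{Z}^2$: two vertices are adjacent exactly when their coordinates differ by at most one in each component. Since the complement of every vertex cover is an independent set, the lower bound $\rho(T_8)\ge 3/4$ reduces to bounding the density of independent sets by $1/4$. The key structural fact is that the four vertices of any axis-aligned unit square $(a,b),(a{+}1,b),(a,b{+}1),(a{+}1,b{+}1)$ are pairwise adjacent in $T_8$ (the two diagonal adjacencies coming from the edges to $(x{-}1,y{+}1)$ and $(x{+}1,y{-}1)$), hence form a $K_4$, so an independent set contains at most one of them. I would then fix a finite rectangular grid $T_8(n,n)$ with vertex set $V(n)$, partition $V(n)$ into the $\lceil n/2\rceil^2$ blocks $\{2i,2i{+}1\}\times\{2j,2j{+}1\}$ (a boundary block may be smaller when $n$ is odd, but is still a clique), and conclude that every independent set of $T_8(n,n)$ has at most $\lceil n/2\rceil^2$ vertices, hence $\mvc(T_8(n,n))\ge n^2-\lceil n/2\rceil^2$. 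Since $C\cap V(n)$ is a vertex cover of the induced subgraph $T_8(n,n)$ for any vertex cover $C$ of the infinite grid, and equals a minimum one for $n\ge n_0$, this yields $\rho(T_8)=\lim_{n\to\infty}\big(n^2-\lceil n/2\rceil^2\big)/n^2\ge 3/4$.

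For the matching upper bound I would exhibit the explicit cover $C$ consisting of all vertices except those whose two coordinates are both even (see Fig.~\ref{fig:rho-infinite-octa}). Any two excluded vertices differ by at least two in some coordinate, so the excluded set is independent and $C$ is a vertex cover; moreover $C$ contains exactly three of the four vertices of every $2\times 2$ block, realizing density $3/4$. Restricting $C$ to the window $\{0,\dots,n-1\}^2$ leaves out exactly $\lceil n/2\rceil^2$ vertices, which matches the lower bound, so $C\cap V(n)$ is a minimum vertex cover of $T_8(n,n)$ for every $n$; this both certifies that $C$ is a minimum vertex cover of the infinite grid in the sense of our definition and gives $\rho(T_8)\le 3/4$, hence $\rho(T_8)=3/4$.

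There is essentially no hard step here. The only points needing a little care are checking that the two ``extra'' diagonal adjacencies really make each unit square a $K_4$, and keeping the floor/ceiling bookkeeping consistent for odd $n$ (both for the block partition and for pinning down the window on which $C$ restricts to a minimum vertex cover); these boundary effects are of lower order and do not affect $\rho$. One could alternatively phrase the lower bound in the ``pattern'' style of the proof of Theorem~\ref{th.vc_infiniteT6}, but the $2\times 2$-clique partition is cleaner.
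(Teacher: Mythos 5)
Your proof is correct, and the lower-bound argument is genuinely different from the paper's. The paper obtains $\rho(T_8)\ge 3/4$ by re-running the two-row pattern analysis of Theorem~\ref{th.vc_infiniteT6}: it considers a row $P$ and the row $P'$ below it and observes that each vertex of $P$ outside the cover forces its three neighbours on $P'$ into the cover, which requires tracking the same case distinctions on runs of covered vertices as in the triangular case. You instead exploit the one feature of $T_8$ that the other grids lack, namely that every axis-aligned unit square is a $K_4$ (your check of the two diagonal adjacencies is the right thing to verify), partition the $n\times n$ window into $\lceil n/2\rceil^2$ such cliques, and bound the independent set --- the complement of the cover --- by one vertex per clique. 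This is cleaner and strictly stronger: it yields the exact value $\mvc(T_8(n,n))=n^2-\lceil n/2\rceil^2$ for every finite window (your upper-bound construction, which is the same cover as the paper's up to relabeling, meets this bound exactly), whereas the paper only extracts the asymptotic density $3/4$. It also dispenses with the slightly informal step in the paper where the $T_6$ argument is asserted to transfer ``in the same way.'' The only thing the paper's route buys is uniformity of presentation across $T_6$ and $T_8$; note that your clique partition is specific to $T_8$, since in $T_6$ the unit square is only a $K_4$ minus an edge and the same partition would give only the trivial bound $1/2$. Your bookkeeping for odd $n$ and the observation that the restriction of any infinite-grid cover to a window is a cover of the induced subgraph are both handled correctly.
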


\begin{proof}
For what concerns the lower bound, we proceed exactly in the same way as in the previous theorem, with the only difference that, when on $P$ there is a vertex outside $C$, all its three adjacent vertices on $P'$ must belong to $C$.
This leads to say that $\rho(T_8) \geq 3/4$.

A feasible vertex cover is obtained by putting in it all the vertices at odd $x$-coordinates and alternated vertices at even $x$-coordinates (see Fig.~\ref{fig:rho-infinite-octa}).
Since this configuration includes in $C$ 3/4 of the vertices, it is optimum.
\end{proof}

\subsection{Finite Regular Grids}

Observe that finite grids are not regular graphs anymore, and so the lower bounds on $\mvc$ for the infinite case could be not valid anymore.

More in detail, for what concerns the squared grid, 
we can partition its $m$ edges into two sets, the $m'$ edges that lie on the external face (and are incident to vertices of degree either 2 or 3) and the $m-m'$ edges that are incident to a vertex of degree 4. 
Let be given any vertex cover $C$.
At least $m'/2$ vertices are necessarily in $C$ to cover all the first set of edges, call $C' \subseteq C$ this set; the vertices of $C'$ cover also at least $|C'|-4$ and at most $|C'|$ edges not lying on the external face (indeed corner vertices are incident only to boundary edges). So, at least
$(m-m'-|C'|)/\Delta \geq (m-m'-m'/2)/4$ further vertices must be added to cover the second set. 
We hence have that $\mvc(T_4(h,w)) \geq m'/2+1/4(m-3m'/2)=m/4+m'/8$.
Substituting the values of $m=2wh-h-w$ and $m'=2(h-1)+2(w-1)$, we get that $\mvc(T_4(h,w)) \geq (wh-1)/2$ implying that $\rho(T_4(h,w)) \geq 1/2-1/(2hw)$.
Note that this result is assumed as known in~\cite{klostermeyer2009edge} but we could not find any proof in the literature, so we provided it here for the sake of completeness.

\medskip

We now study the hexagonal grid.
First observe that $h \geq 4$ must always be even and $w \geq 2$. 
Consider the subgraph of $T_3(h,w)$ induced by two consecutive columns of vertices, $T_3(h,2)$, and a cover $C$ on it; it has $2(h-1)$ vertices organized on $h/2-1$ hexagons; all the $2(h-1)$ edges lying on the external face must be covered including in $C$ at least half of the vertices; this remains true for every subgraph of this kind;
this proves that also $\mvc(T_3(h,w)) \geq hw/2$.

\medskip

For what concerns the triangular and octagonal grids, we cannot use anymore the idea of considering two consecutive rows (as we did in the infinite case), because the finiteness of the graph could not guarantee the correctness of the reasoning.

\medskip

The upper bounds on $\rho$ for all the four regular grids can be inherited by the corresponding infinite grids by simply observing that, in view of the definition of finite  grids, any minimum vertex cover $C$ of an infinite grid induces also a vertex cover (not necessarily minimum) on a finite (rectangular) subgraph.

\medskip

In the following theorem we summarize the previous results and complete them with the uncovered cases:

\begin{theorem}
\label{th.finite_vc_T3_T6}
$\frac{1}{2}-\frac{1}{2hw} \leq \rho(T_4)(h,w) \leq \frac{1}{2}$ for any $h, w \geq 2$ and $\rho(T_3)(h,w)=\frac{1}{2}$ for any $h \geq 4$ and $w \geq 2$.

\noindent
$\frac{2}{3}-\frac{1}{3w} \leq \rho(T_6(h,w))  \leq \frac{2}{3}$ for any $h \geq 2$ and $w \geq 2$.

\noindent
$\frac{3}{4}-\frac{1}{4w} \leq \rho(T_8(h,w)) \leq \frac{3}{4}$ for $h \geq 2$ and $w \geq 2$.
\end{theorem}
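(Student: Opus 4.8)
Most of the statement is already in hand. The upper bounds $\rho\le\tfrac12,\tfrac23,\tfrac34$ for $T_4,T_6,T_8$ and $\rho\le\tfrac12$ for $T_3$ follow from the infinite cases (Theorems~\ref{th.vc_infiniteT3},~\ref{th.vc_infiniteT6},~\ref{th.vc_infiniteT8}), since the restriction of a minimum vertex cover of an infinite grid to a finite rectangular subgraph is again a vertex cover; and for $T_3$ and $T_4$ the matching (resp.\ almost matching) lower bounds were obtained in the discussion preceding the statement, so $T_4$ is done and for $T_3$ the two bounds coincide at $\tfrac12$. Hence I would only need to prove $\mvc(T_6(h,w))\ge\tfrac{h(2w-1)}{3}$ and $\mvc(T_8(h,w))\ge\tfrac{h(3w-1)}{4}$. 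Since two consecutive rows no longer span an infinite path, the plan is to cut the finite grid into \emph{horizontal strips} of two rows, bound the vertex cover of a single strip, and add up: if $H_1,\dots,H_t$ are vertex-disjoint induced subgraphs covering $V$, then $\mvc(G)\ge\sum_i\mvc(H_i)$.

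For $T_6$ the strip $T_6(2,w)$ is, after the zig-zag relabelling $t_0,b_0,t_1,b_1,\dots,t_{w-1},b_{w-1}$ of its $2w$ vertices, precisely the square $P_{2w}^2$ of a path (consecutive and next-to-consecutive vertices in this order are adjacent, and no others). Since the complement of a vertex cover of $P_n^2$ is a set of path-vertices pairwise at distance $\ge 3$, it has at most $\lceil n/3\rceil$ elements, so $\mvc(T_6(2,w))=2w-\lceil 2w/3\rceil=\lfloor 4w/3\rfloor\ge\tfrac{2(2w-1)}{3}$. For $h$ even, summing the $h/2$ strips gives $\mvc(T_6(h,w))\ge\tfrac{h(2w-1)}{3}$. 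For $h$ odd I would instead use one strip of three rows together with $(h-3)/2$ strips of two rows; it then suffices to establish separately that $\mvc(T_6(3,w))\ge 2w-1$, equivalently that $T_6(3,w)$ has independence number at most $w+1$, which I expect to prove by an exchange argument showing that a maximum independent set may be taken to avoid the middle row (a middle-row vertex of such a set is swapped for a neighbour in the top or bottom row).

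For $T_8$ I would use that, under the extension convention adopted above, $T_8(h,w)$ is the king graph on the $h\times w$ board, so an independent set is a placement of non-attacking kings: at most one per $2\times 2$ block, and that many are realizable, hence $\alpha(T_8(h,w))=\lceil h/2\rceil\lceil w/2\rceil$ and $\mvc(T_8(h,w))=hw-\lceil h/2\rceil\lceil w/2\rceil$. For $h$ even this is $hw-\tfrac h2\lceil w/2\rceil\ge hw-\tfrac{h(w+1)}{4}=\tfrac{h(3w-1)}{4}$ (equivalently, strip-wise, $\mvc(T_8(2,w))=2w-\lceil w/2\rceil\ge\tfrac{3w-1}{2}$); the case of odd $h$ is again handled by splitting off a strip of three rows. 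The matching upper bounds $\rho(T_6),\rho(T_8)\le\tfrac23,\tfrac34$ come from restricting the infinite covers of Theorems~\ref{th.vc_infiniteT6} and~\ref{th.vc_infiniteT8}, translated so that the excluded (independent) vertices sit as densely as possible inside the rectangle.

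The hard part is the parity of $h$: the strip decomposition is clean only when $h$ is even, and for odd $h$ one must extract the right bound from a residual strip of two or three rows. That is exactly where the ad hoc ingredients live — the exchange argument giving $\mvc(T_6(3,w))\ge 2w-1$ and the exact king-graph count for $T_8$ — and it is also the point where one has to be careful to land the stated constants rather than a slightly weaker version.
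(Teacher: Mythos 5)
Your overall route --- upper bounds inherited from the infinite covers, lower bounds by partitioning the rectangle into vertex-disjoint two-row strips and summing $\mvc$ over the strips --- is essentially the paper's, which likewise reduces to $T_6(2,w)$ and $T_8(2,w)$; but your treatment of a single strip is cleaner and more explicit. You identify $T_6(2,w)$ with $P_{2w}^2$ and $T_8(h,w)$ with the king graph and read off exact independence numbers, where the paper instead runs an induction on $w$ with a case analysis on the two rightmost vertices, and then passes to general $h$ only by asserting that $T_\Delta(h,w)$ ``contains'' a $T_\Delta(2,w)$ and that the ratio is kept. Your strip values $\mvc(T_6(2,w))=\lfloor 4w/3\rfloor$ and $\mvc(T_8(2,w))=2w-\lceil w/2\rceil$ agree with the paper's, and your odd-$h$ patch for $T_6$ is sound: $\alpha(T_6(3,w))\le w+1$ is true (any two adjacent columns of $T_6(3,w)$ induce a six-vertex graph of independence number $2$, so $\alpha\le 2\lceil w/2\rceil$), though you should either write out the middle-row exchange in full or replace it by this direct two-column count. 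So the $T_3$, $T_4$ and $T_6$ parts of your plan go through.

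The $T_8$ part for odd $h$ does not go through, and cannot: your own formula $\mvc(T_8(h,w))=hw-\lceil h/2\rceil\lceil w/2\rceil$ already refutes the inequality you defer to the ``strip of three rows''. For $h=w=3$ it gives $\mvc=9-4=5$, whereas $\frac{h(3w-1)}{4}=6$; correspondingly $\rho(T_8(3,3))=\frac59$, which is strictly below the theorem's claimed lower bound $\frac34-\frac1{4w}=\frac23$ (similarly $\rho(T_8(3,4))=\frac23<\frac{11}{16}$). In general your exact count gives $\rho=\frac34-\frac1{4w}$ exactly when $h$ is even, but drops to $\frac34-\frac1{4h}-\frac1{4w}-\frac1{4hw}$ when both $h$ and $w$ are odd, so no choice of residual strip can recover the stated constant for odd $h$. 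This means the gap is not merely in your sketch: the $T_8$ lower bound of the theorem is itself false for odd $h$, and the paper's proof hides this behind the same unjustified leap from $T_8(2,w)$ to $T_8(h,w)$. As a proof of the statement as written, however, your plan breaks down at exactly the point you flagged as delicate, and you should have caught the contradiction between your king-graph formula and the bound you set out to prove.
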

\begin{proof}
The exact value of $\rho(T_4(h,w))$ and $\rho(T_3(h,w))$ comes from the observations above.
It remains to prove only lower and upper bounds on $\rho(T_6(h,w))$ and $\rho(T_8(h,w))$. 

Let us deal with $\rho(T_6(h,w))$ first. 
Consider the subgraph of $T_6(h,w)$, $T_6(2,w)$, induced by any two consecutive rows of vertices and reduce to study case $h=2$; the number of vertices of this subgraph is $2w$.

If $w \geq 2$, 
Let $v$ and $v'$ be the rightmost vertices of $T_6(2,w)$, $u$ and $u'$ be the vertices at their immediate left, and let  $v$ and $u$ belong to the upper row.

Let $C$ any given vertex cover for $T_6(2,w)$.
We will prove that  $\rho(T_6(2,w)) \geq \frac{2}{3}$ if both $v$ and $v'$ belong to $C$, $\rho(T_6(2,w)) \geq \frac{2}{3}-\frac{1}{6w}$ if only one between $v$ and $v'$ is in $C$ but both $u$ and $u'$ are in $C$, while $\rho(T_6(2,w)) \geq \frac{2}{3}-\frac{1}{3w}$ if one between $v$ and $v'$  and one between $u$ and $u'$ belong to $C$. (Note that it cannot be that both $v$ and $v'$ are outside $C$, otherwise edge $\{ v,v' \}$ would remain uncovered.)

We proceed by induction on $w$.
The basis of the induction is represented by special cases $T_6(2,2)$ and $T_6(2,3)$ for which the claim is easily true since $\rho(T_6(2,2))=\frac{1}{2}=\frac{2}{3}-\frac{1}{3 \cdot 2}$ and $\rho(T_6(2,3))=\frac{2}{3}$ (See Fig.~\ref{fig:B.a} and~\ref{fig:B.b}).

\begin{figure}[ht]
\centering
\subfloat[\label{fig:B.a}]{
\includegraphics[scale=0.7]{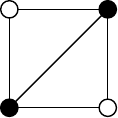}
}~
\subfloat[\label{fig:B.b}]{
\includegraphics[scale=0.7]{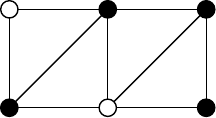}
}~
\subfloat[\label{fig:B.c}]{
\includegraphics[scale=0.7]{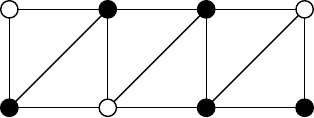}
}~
\subfloat[\label{fig:B.e}]{
\includegraphics[scale=0.7]{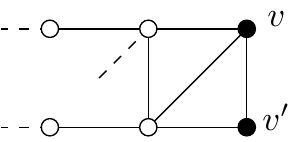}
}~
\subfloat[\label{fig:B.f}]{
\includegraphics[scale=0.7]{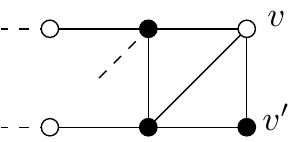}
}~
\subfloat[\label{fig:B.g}]{
\includegraphics[scale=0.7]{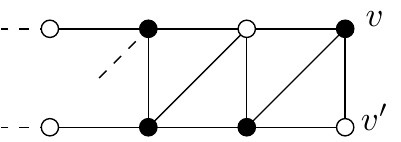}
}
\caption{Possible cases in the proof of Theorem~\ref{th.finite_vc_T3_T6}.}

\end{figure}

Given $T_6(2,w)$, $w \geq 4$, it trivially contains $T_6(2,w-1)$, obtained by simply eliminating $v$ and $v'$ from it.
Consider a vertex cover $C$ for $T_6(2,w)$; it induces a vertex cover $C'$ on $T_6(2,w-1)$ and for it the inductive hypothesis holds.

Now, consider $v$ and $v'$.
If they both belong to $C$ (see Fig.~\ref{fig:B.e}), we have no information about the rightmost vertices of $T_6(2,w-1)$, so we use the worst inductive hypothesis $\frac{|C'|}{2(w-1)} \geq \frac{2}{3}-\frac{1}{3(w-1)}$ from which we deduce $|C'|\geq \frac{4}{3}w-2$.
So:
$$
\rho(T_6(2,w))=\frac{|C|}{2w}=
\frac{|C'|+2}{2w} \geq 
\frac{\frac{4}{3}w-2+2}{2w}=\frac{2}{3}.
$$

If, on the contrary, only $v'$ belongs to $C$, it is easy to see that  $u$ and $u'$ must necessarily be in $C$ (see Fig.~\ref{fig:B.f}).
In this case, $|C'| \geq \frac{4}{3}w-\frac{4}{3}$
and we get:
$$
\rho(T_6(2,w))=\frac{|C|}{2w}=
\frac{|C'|+1}{2w} \geq 
\frac{\frac{4}{3}w-\frac{4}{3}+1}{2w}\geq \frac{2}{3}-\frac{1}{6w} \geq \frac{3}{5}+\frac{2}{5w}.
$$

The same holds if $v$, $u$ and $u'$ belong to $C$ while $v'$ does not.

If, finally, $v$ belongs to $C$ while $v'$ does not, and only one between $u$ and $u'$ belong to $C$, it must necessarily be $u'$ (otherwise $C$ would not be a vertex cover), so that the two vertices to the immediate left must belong to $C$ (see Fig.~\ref{fig:B.g}).
In such a case we exploit the inductive hypothesis on $T_6(2,w-2)$ and its induced vertex cover $C''$ whose cardinality is $|C''| \geq \frac{4}{3}w-\frac{8}{3}$:
$$
\rho(T_6(2,w))=\frac{|C|}{2w}=\frac{|C''|+2}{2w} \geq
\frac{\frac{4}{3}w-\frac{8}{3}+2}{2w} \geq \frac{2}{3}- \frac{1}{3w}.
$$

Observe that all these three lower bounds are tight since they are achieved, for example, for $w=2,3,4$ (see Fig.~\ref{fig:B.c}).

Since every $T_6(h,w)$ with $h \geq 2$ and $w \geq 5$ contains a $T_6(2,w)$, this ratio continues to hold for every finite triangular grid.

For what concerns the upper bound, an arrangement alternating a pair of vertices in $C$ with a single vertex outside $C$ on each column (see Fig.~\ref{fig:rho-infinite-tri})
covers the whole graph and the corresponding value of $\rho$ is upper bounded by 2/3, if we opportunely shift this configuration in order to maximize the number of vertices not belonging to $C$. 

\medskip

Finally, we deal with $\rho(T_8(h,w))$.

We replicate here the idea of reducing to $T_8(2,w)$ and performing an inductive proof by eliminating from the graph the rightmost vertices $v$ and $v'$.
Our claim is that $\rho(T_8(2,w)) \geq 3/4$ if both $v$ and $v'$ belong to $C$ while $\rho(T_8(2,w)) \geq 3/4- 1/(4w)$ otherwise.

The basis of induction is the special case $T_8(2,2)$ for which $\rho(T_8(2,2))=3/4$. 
For any $w \geq 3$, we consider vertex cover $C'$ deduced by $C$ on the subgraph of $T_8(2,w)$ induced by the $2(w-1)$ leftmost vertices.

If both $v$ and $v'$ belong to $C$, we do not know anything about the vertices at their immediate left, so by inductive hypothesis, $\frac{|C'|}{2(w-1)} \geq 3/4 -\frac{1}{4(w-1)}$ from which it outcomes that $|C'| \geq 3/2w-2$.
Then:
$$
\rho(T_8(2,w))=\frac{|C|}{2w}=\frac{|C'|+2}{2w}\geq \frac{3}{4}.
$$

If, on the contrary, only one between $v$ and $v'$ belong to $C$, then the vertices at their immediate left must both be in $C$, hence $|C'| \geq \frac{3}{2}(w-1)$.
Hence:
$$
\rho(T_8(2,w))=\frac{|C|}{2w}=\frac{|C'|+1}{2w}\geq \frac{\frac{3}{2}(w-1)+1}{2w}=\frac{3}{4}+\frac{-\frac{3}{2}+1}{2w} \geq \frac{3}{4}-\frac{1}{4w}.
$$
The observation that every $T_8(h,w)$ with $h,w \geq 2$ contains a $T_8(2,w)$ and that the ratio is kept concludes the lower bound proof.

An upper bound of 3/4 is achieved by alternating a column where every vertex is in $C$ with a column where vertices are alternately inside and outside $C$ (see Fig.~\ref{fig:rho-infinite-octa}). 
Indeed, if $w$ is even, there is the same number of columns with all vertices in $C$ and with vertices alternately in $C$; if, vice-versa, $w$ is odd, the number of columns with all vertices in $C$ is one less than the number of columns with vertices alternately in $C$.
\end{proof}

\section{Eternal Vertex Cover in (Infinite and Finite) Regular Grids}

Once we know the value of the vertex cover number of the considered infinite and finite grids, in this section we study their eternal vertex cover number. 

\subsection{Infinite Regular Grids}

For what concerns hexagonal and squared grids, since an optimum vertex cover $C$ for these grids includes alternated vertices, and since the complement of $C$ is a minimum vertex cover as well, such a set constitutes also an eternal minimum vertex cover. 
Indeed, consider squared grids first: after each attack, the guard lying on the vertex whose incident edge has been attacked will traverse that edge, and all the other guards will shift in the same direction. 

In the case of hexagonal grids, we do the same if the attacked edge is vertical; if, on the contrary, the attacked edge is slanted, we shift all the guards lying on an extreme of a slanted edge to the other extreme.
These strategies allow us to pass from a minimum vertex cover $C$ to its complement, which is another minimum vertex cover. 
The infinity of the grid guarantees that any defense step can always be successfully performed.

Also in the case of triangular grids, in which we start from a configuration alternating on each row pairs of vertices inside the vertex cover with single vertices outside it,
whenever an edge is attacked, all guards shift in the same direction as the guard defending the attacked edge.
It is easy to see that the infinity of the grid guarantees that any defense step can always be successfully performed.

Finally, also in the case of octagonal grids, it is easy to transform an optimum vertex cover into another one by simply shifting all the guards in the same direction as the guard defending the attacked edge.

Hence, the following statement can be deduced:

\begin{theorem}
\label{th.infinite_EVC_T3_T4_T6}
$\rho^{\infty}(T_3)=\rho^{\infty}(T_4)=\frac{1}{2}$; 
$\rho^{\infty}(T_6)=\frac{2}{3}$ and $\rho^{\infty}(T_8)=\frac{3}{4}$.
\end{theorem}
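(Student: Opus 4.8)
\smallskip
\noindent\emph{Proof proposal.} The plan is to read off all four equalities from the vertex‑cover values already established: the lower bounds are free because an eternal vertex cover is in particular a vertex cover, and the upper bounds follow by checking that the optimal vertex covers exhibited for the infinite grids are themselves eternal, defended by the naïve ``everyone moves in the same direction'' strategy.

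For the lower bounds I would simply invoke the inequality $\rho\le\rho^{\infty}$ recalled above: if $S$ is an eternal vertex cover of an infinite grid $G$, then for every finite rectangular subgrid $G(n,n)$ with vertex set $V(n)$ the set $S\cap V(n)$ covers all edges of $G(n,n)$, so $|S\cap V(n)|\ge \mvc(G(n,n))$, and dividing by $n^2$ and letting $n\to\infty$ gives $\rho^{\infty}(G)\ge\rho(G)$. Combined with $\rho(T_3)=\rho(T_4)=\tfrac12$, $\rho(T_6)=\tfrac23$, $\rho(T_8)=\tfrac34$ from Theorems~\ref{th.vc_infiniteT3}, \ref{th.vc_infiniteT6} and~\ref{th.vc_infiniteT8}, this yields $\rho^{\infty}(T_3),\rho^{\infty}(T_4)\ge\tfrac12$, $\rho^{\infty}(T_6)\ge\tfrac23$ and $\rho^{\infty}(T_8)\ge\tfrac34$.

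For the upper bounds I would show that the minimum vertex covers $C$ displayed in the proofs of Theorems~\ref{th.vc_infiniteT3}, \ref{th.vc_infiniteT6} and~\ref{th.vc_infiniteT8} (Figures~\ref{fig:rho-infinite-hex}, \ref{fig:rho-infinite-square}, \ref{fig:rho-infinite-tri} and~\ref{fig:rho-infinite-octa}) are eternal vertex covers, which forces $\rho^{\infty}(G)\le\rho(G)$ and hence equality. The defense rule is: when an edge $\{u,v\}$ carrying a guard on $u$ is attacked, set $d=v-u$, an edge direction of the grid, and send every guard from its position $s$ to $s+d$. This map is injective (a translation), it satisfies $f(u)=v$, and $f(s)=s+d\in N[s]$ because $d$ is an edge direction, so it is a legal defense. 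For $T_4$, $T_6$ and $T_8$ the edge set is invariant under all translations of $\mathbb{Z}^2$, so $C+d$ is again a minimum vertex cover, and translating an $n\times n$ window by $d$ as well turns a density‑optimal window for $C$ into one for $C+d$; hence the shifted configuration still satisfies the defining clause of an eternal vertex cover, and iterating along the (finite or infinite) attack sequence shows $C$ is eternal.

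The hexagonal grid $T_3$ needs a small variant, since translation by $(0,1)$ is \emph{not} an automorphism of $T_3$ (the slanted edges change direction with $y\bmod 4$); this, together with the bookkeeping that a shifted configuration still restricts to a minimum vertex cover on arbitrarily large windows, is the only point requiring genuine care. I would take $C$ to be the vertices of even $y$‑coordinate and split on the attacked edge. For a vertical attack the shift $d=(0,\pm1)$ is still a legal defense (the vertical edge in direction $d$ exists at every vertex) and carries $C$ onto the vertices of odd $y$‑coordinate, again a minimum vertex cover of density $\tfrac12$ (on the window shifted by one row). For a slanted attack, use that in $T_3$ every vertex has exactly one incident slanted edge (immediate from the degree‑$3$ structure), so the slanted edges form a perfect matching of $V$ pairing each even‑$y$ vertex with an odd‑$y$ one; moving every guard across its incident slanted edge is then injective, answers the attack, and again exchanges the even‑$y$ and odd‑$y$ configurations. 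Since both are optimal and the strategy is symmetric in them, $C$ is eternal, giving $\rho^{\infty}(T_3)=\tfrac12$ and completing the four equalities.
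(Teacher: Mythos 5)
Your proposal is correct and follows essentially the same route as the paper: lower bounds from $\rho\le\rho^{\infty}$ combined with Theorems~\ref{th.vc_infiniteT3}--\ref{th.vc_infiniteT8}, and upper bounds by showing the displayed optimal covers are eternal via the ``shift every guard in the attack direction'' defense, with the slanted-edge perfect matching handling the non-translation-invariant case of $T_3$. Your write-up is in fact somewhat more careful than the paper's (explicit injectivity of the translation, and the check that the shifted configuration still restricts to minimum covers on translated windows), but the underlying argument is identical.
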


\subsection{Finite Squared and Hexagonal grids}

When passing from infinite to finite grids, an upper bound on $\rho^{\infty}(T_4(h,w))$ becomes $\frac{1}{2}+\frac{1}{2hw}$ thanks to a technique consisting in computing a Hamiltonian cycle passing through the attacked edge (which can be found in polynomial time on squared grids) and moving all the guards along the cycle on in the same direction~\cite{klostermeyer2009edge}.

Unfortunately, the problem of finding a Hamiltonian cycle in hexagonal grids is proven to be NP-complete~\cite{arkin2009not} and so we cannot exploit the same technique.
We propose another defense strategy that proves the following result.

\begin{theorem}
\label{th.finite_EVC_T3}
$\rho^{\infty}(T_3(h,w))=\frac{1}{2}$ for any $h \geq 4$ and $w \geq 2$.
\end{theorem}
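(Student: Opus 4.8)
We already know from Theorem~\ref{th.finite_vc_T3_T6} that $\rho(T_3(h,w))=\frac12$, and since $\rho \le \rho^\infty$ this gives the lower bound $\rho^\infty(T_3(h,w)) \ge \frac12$. Moreover the same minimum vertex cover that achieves $\rho=\frac12$ has a complement which is also a minimum vertex cover (the set of vertices at even $y$-coordinate and its complement, the set at odd $y$-coordinate, both being vertex covers of exactly the same cardinality $hw/2$ because $h$ is even). So the entire content of the theorem is the matching upper bound $\rho^\infty(T_3(h,w)) \le \frac12$, i.e.\ exhibiting an eternal defense strategy that uses only $hw/2$ guards. The plan is to start the guards on the ``even-$y$ rows'' configuration $C$ and show that every attack can be repelled landing in $C$ or in its complement $\overline{C}$ (the ``odd-$y$ rows'' configuration), both of which are minimum vertex covers, so the invariant ``the configuration is always $C$ or $\overline C$'' is maintained and every future attack is again defendable.

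The case analysis splits on the type of the attacked edge. For a \emph{vertical} edge $\{(x,y),(x,y+1)\}$: exactly one endpoint is guarded (say the one on an even row), and the natural move is to shift \emph{every} guard vertically by one step in the direction that sends the defending guard across the attacked edge — i.e.\ every even row moves onto the odd row just above (or just below). This is a bijection from $C$ to $\overline C$ provided the vertical shift does not run off the top or bottom of the grid. Because $h$ is even and the grid has $h$ rows, shifting all guards ``up by one'' would push the top guarded row out of the grid; so one must instead shift ``down'' (or up, depending on parity of the defending row), and here the key point is that for a vertical attack the defending guard can always be taken to move \emph{into} the grid, because its neighbor across the attacked edge is by definition still a vertex of $T_3(h,w)$. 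The subtlety is that a global uniform shift is only legal if \emph{all} guards can move the same direction; at the boundary row on the side we are shifting away from, there is simply no guard (that row is unguarded in $C$), so nothing needs to move there, and on the opposite boundary the guarded row moves one step inward — both fine. For a \emph{slanted} edge, recall each such edge joins $(x,4k)\!-\!(x+1,4k+1)$, or $(x,4k+1)\!-\!(x-1,4k)$, etc.; in all cases the slanted edge joins an even row to an odd row, so again exactly one endpoint is in $C$. The move is: every guard sitting on an endpoint of a slanted edge slides along its (unique) slanted edge to the other endpoint. Since in $T_3$ every vertex is incident to exactly one slanted edge, this is a well-defined involution-like map, and one checks it sends $C$ to $\overline C$ exactly; here there is no boundary issue at all, since sliding along a slanted edge that exists keeps you inside the grid.

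The step I expect to be the main obstacle is \emph{verifying that after a vertical shift the image is genuinely $\overline C$ and not some shifted-by-two copy of $C$ that has partially fallen off the grid} — in other words, controlling what happens at the two horizontal boundaries. One has to argue that the direction of the shift is forced by which row the defending guard is on, and that with $h$ even the ``giving up'' boundary row is always the empty one; a clean way to phrase this is: color rows $0,1,\dots,h-1$; $C$ occupies rows of one parity class; for a vertical attack on edge between rows $y,y+1$, exactly one of $y,y+1$ has the parity of $C$; shift all guards toward the other one; since row $0$ and row $h-1$ have opposite parities (as $h$ is even), whichever direction we shift, the boundary row we vacate is precisely the unguarded one, so no guard is lost and the new configuration is the full complement $\overline C$. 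A secondary, more bookkeeping-heavy obstacle is checking that $C$ and $\overline C$ are actually vertex covers of $T_3(h,w)$ and have equal size $hw/2$ — this needs $h$ even (given) and a short count, and is exactly the configuration already pictured in Fig.~\ref{fig:rho-infinite-hex} restricted to the finite grid. Once the invariant is established, eternity is immediate: from either $C$ or $\overline C$ the same two moves (vertical global shift, or slanted slide) are available, so by induction on the length of the attack sequence the defender never loses, giving $\rho^\infty(T_3(h,w)) \le \frac{hw/2}{hw} = \frac12$ and hence equality.
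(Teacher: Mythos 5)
Your overall plan --- maintain the invariant that the guards occupy one of two complementary half-size covers (``even rows'' / ``odd rows'') and toggle between them at each attack --- is the same as the paper's, and your lower bound via $\mvc \le \evcn$ is fine. But both of the concrete moves you propose for the toggle break down at the boundary of the finite grid, and the parity argument you offer to dismiss the boundary issue is incorrect. Concretely: with $h$ even and $C$ the even rows $\{0,2,\dots,h-2\}$, shifting every guard \emph{up} by one does give $\overline{C}=\{1,3,\dots,h-1\}$, but shifting every guard \emph{down} by one gives $\{-1,1,\dots,h-3\}$: the guards in row $0$ fall off the grid and nothing arrives in row $h-1$, so edges incident to row $h-1$ become uncovered. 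Hence ``whichever direction we shift, the boundary row we vacate is precisely the unguarded one'' is false, and an attack on a vertical edge whose guarded endpoint lies in an even row $y\ge 2$ above its unguarded endpoint (forcing the defender downward) cannot be repelled by a global uniform shift from $C$. The slanted-slide move has the analogous defect on the left and right sides: in $T_3(h,w)$ the vertices of the outermost columns whose slanted edge would leave the rectangle have degree $2$ and no slanted edge at all, so ``slide along the unique slanted edge'' is undefined there; if such a guard stays put, the vertex below or above it that was supposed to receive a guard from a (nonexistent) slanted neighbour receives none, and a vertical edge in that column is left uncovered. A further structural point you missed is that adjacent columns of $T_3(h,w)$ are offset (two consecutive columns contain $2(h-1)$ vertices, not $2h$), so even the ``interior'' of a column does not admit a clean length-preserving vertical translation at both ends.

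This is exactly where the paper's proof spends all of its effort: its strategy moves the two columns incident to the attacked edge in \emph{opposite} vertical directions (so the defending guard can go either up or down as required), shifts all remaining columns in a fixed direction, and reroutes the guards that would fall off the top or bottom of a column diagonally into the adjacent column through a slanted edge, with a four-way case analysis on the type of attacked edge. To repair your argument you would need to supply moves of this kind for the extremal rows and for the degree-$2$ boundary vertices; as written, the invariant ``the configuration is always $C$ or $\overline{C}$'' is not maintained.
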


\begin{proof}
Let \(S\) the subset of vertices of $T_3(h,w)$ in which the guards are placed.
$S$ includes alternated vertices, for any $x$-coordinate, for example at odd $y$-coordinates. So its cardinality is exactly equal to the half of the vertices of the graph, due to the parity of $h$.

Given any attacked edge, our strategy will shift all the guards in the graph.
Namely, all the guards, except for the ones lying on the two columns adjacent to the attacked edge, will shift one position down (or respectively up)  w.r.t.\ their original one.
As an invariant of the strategy, after every attack and defense, guards lie anyway on alternated vertices for any $x$-coordinate (either at odd or at even $y$-coordinates). 
This guarantees the correctness of our strategy.

More formally, let the origin of the Cartesian plane coincide with the lower left corner (that is at coordinates $(0,0)$) and, w.l.o.g., let the embedding be such that there is always a vertex at point $(0,0)$.

It is not restrictive to assume that guards lie on every vertex with odd $y$-coordinates (otherwise we can slightly modify the following strategy in order to let it work when guards lie at even $y$-coordinates).

Let the attacked edge be \(\{(x',y'),(x'',y'')\}\), where $(x',y')$ always contains the guard while $(x'',y'')$ does not; 
$x''$ is either $x'$ or $x' \pm 1$ and  $y''$ is $y'\pm 1$.

The strategy is the following:
\begin{itemize}
\item every guard in \((x,y)\) for $x < x'$ and every guard in \((x,y)\) for $x>x'+1$ and $y \neq 1, h-1$ move down, i.e., \(f((x,y)) = (x,y-1)\);

\item every guard in \((x,h-1)\) for  $x>x'+1$ move down on the adjacent column, {\em i.e.} either \(f((x,h-1)) = (x+1,h-2)\) or \(f((x,h-1)) = (x-1,h-2)\) (according to the value of $h$);

\item every guard in \((x,1)\) for  $x>x'+1$ move down on the left adjacent column, {\em i.e.} \(f((x,1)) = (x-1,0)\);
\end{itemize}

For the remaining guards, {\em i.e.} the ones placed at $x$-coordinates $x'$ and $x''$, we have
different cases:

\begin{itemize}
\item if the attacked edge is \(\{(x',y'),(x',y'+1)\}\) (see Figure~\ref{fig:hex-defense-strategy}a)
every guard in \((x',y)\) for \(y \in [1, \ldots, h-2]\) moves up, {\em i.e.} \(f((x',y)) = (x',y+1)\) and every guard in \((x'+1,y)\) for \(y \in [2, \ldots, h-2]\) moves down, {\em i.e.} \(f((x'+1,y)) = (x'+1,y-1)\), finally, guard in position \((x'+1,1)\) moves down to the left adjacent column, {\em i.e.} \(f((x'+1,1)) = (x',0)\) and guard in position \((x',h-1)\) moves down to the right adjacent column, {\em i.e.} \(f((x',h-1)) = (x'+1,h-2)\);

\item if the attacked edge is \(\{(x',y'),(x',y'-1)\}\) (see Figure~\ref{fig:hex-defense-strategy}b)
every guard in \((x',y)\) for \(y \in [1, \ldots, h-1]\) moves down, {\em i.e.} \(f((x',y)) = (x',y-1)\) and every guard in \((x'+1,y)\) for \(y \in [1, \ldots, h-2]\) moves up, {\em i.e.} \(f((x'+1,y)) = (x'+1,y+1)\), finally, guard in position \((x'+1,h-1)\) moves down to the right adjacent column, {\em i.e.} \(f((x'+1,h-1)) = (x'+2,h-2)\);

\item if the attacked edge is \(\{(x',y'),(x'+1,y'-1)\}\) (see Figure~\ref{fig:hex-defense-strategy}c)
every guard in \((x',y)\) for \(y \in [1, \ldots, y'-1]\) moves up, {\em i.e.} \(f((x',y)) = (x',y+1)\) and every guard in \((x',y)\) for \(y \in [y'+1, \ldots, h-1]\) moves down, {\em i.e.} \(f((x',y)) = (x',y-1)\).
Every guard in \((x'+1,y)\) for \(y \in [2, \ldots, y'-2]\) moves down, {\em i.e.} \(f((x',y)) = (x',y-1)\) and every guard in \((x',y)\) for \(y \in [y', \ldots, h-2]\) moves up, {\em i.e.} \(f((x',y)) = (x',y+1)\).
Finally, guard in position \((x'+1,h-1)\) moves down to the right adjacent column, {\em i.e.} \(f((x'+1,h-1)) = (x'+2,h-2)\), guard in position \((x'+1,1)\) moves down to the left adjacent column, {\em i.e.} \(f((x'+1,1)) = (x',0)\) and guard in position \((x',y')\) moves along the attacked edge.

\item if, finally, the attacked edge is \(\{(x',y'),(x'-1,y'-1)\}\)
(see Figure~\ref{fig:hex-defense-strategy d})
every guard in \((x'-1,y)\) for \(y \in [1, \ldots, y'-2]\) moves down, {\em i.e.} \(f((x'-1,y)) = (x'-1,y-1)\) and every guard in \((x'-1,y)\) for \(y \in [y', \ldots, h-2]\) moves up, {\em i.e.} \(f((x',y)) = (x',y+1)\).
Every guard in \((x',y)\) for \(y \in [1, \ldots, y'-1]\) moves up, {\em i.e.} \(f((x'-1,y)) = (x'-1,y+1)\) and every guard in \((x'-1,y)\) for \(y \in [y'+1, \ldots, h-2]\) moves down, {\em i.e.} \(f((x',y)) = (x',y-1)\).
Finally, guards in position  \((x'-1,h-1)\) and \((x',h-1)\) move down to the right adjacent column, {\em i.e.} \(f((x'-1,h-1)) = (x',h-2)\) and \(f((x',h-1)) = (x'+1,h-2)\), guard in position \((x',y')\) moves along the attacked edge.
\end{itemize}
\end{proof}

\begin{figure}[ht]
\hspace*{-.8cm}
\subfloat[]{
\includegraphics[scale=0.65]{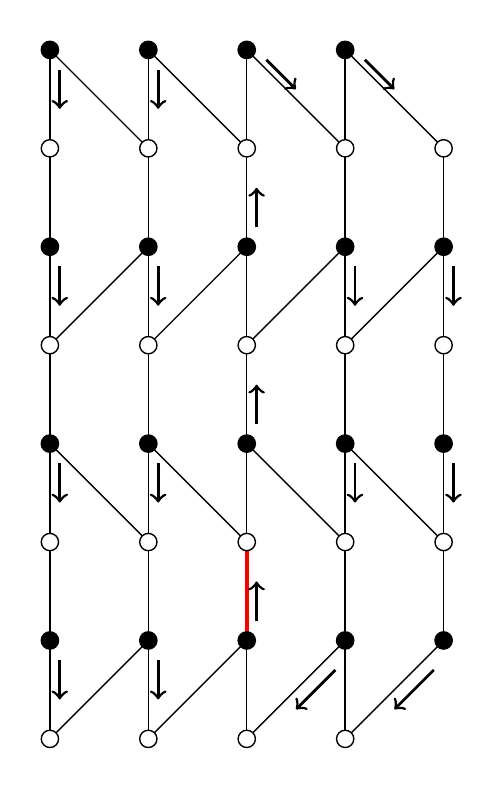}
}~
\subfloat[]{
\includegraphics[scale=0.65]{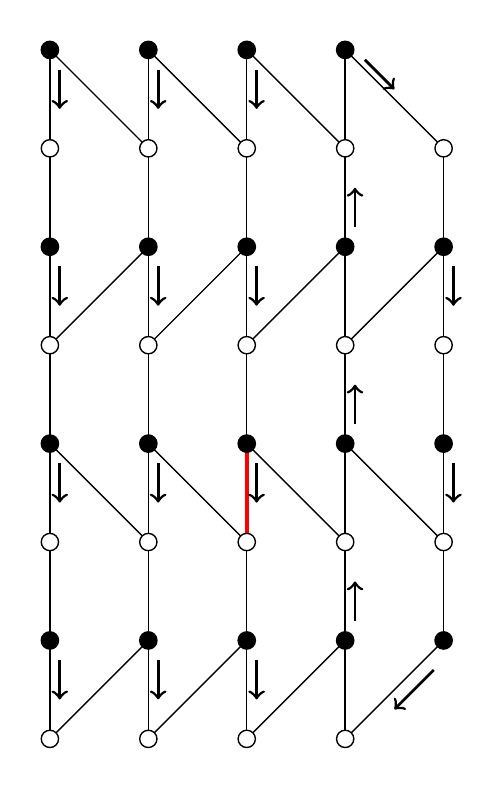}
}~
\subfloat[]{
\includegraphics[scale=0.65]{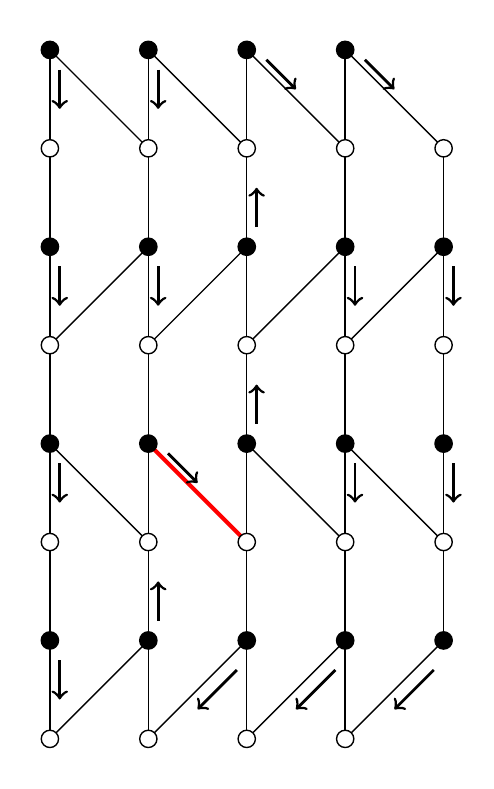}
}~
\subfloat[\label{fig:hex-defense-strategy d}]{
\includegraphics[scale=0.65]{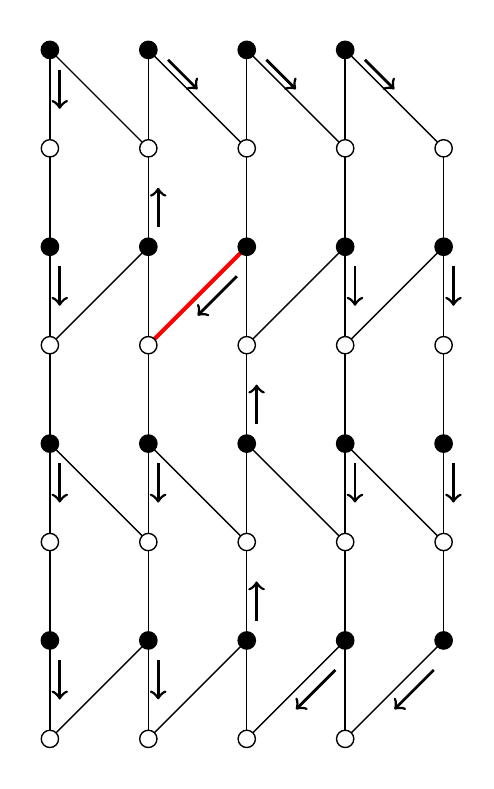}
}
\caption{
Possible cases in the proof of Theorem~\ref{th.finite_EVC_T3}.
}
\label{fig:hex-defense-strategy}
\end{figure}

\subsection{Finite Triangular grids}

Also in this case we cannot exploit the idea used in~\cite{klostermeyer2009edge} for the squared grid based on the Hamiltonian cycle, because on finite triangular grids this approach could bring two vertices without any guard to become adjacent.

Hence, we propose the defense strategy in the following proof that requires guards on approximately 2/3 of the vertices. 

\begin{theorem}
\label{th.finite_EVC_T6}
$\rho^{\infty}(T_6(h,w)) \le \frac{2}{3}+\frac{4}{h}+\frac{4}{h^2}$ for any $w \geq h \geq 3$, $\rho^{\infty}(T_6(2,w)) \le \frac{2}{3} +\frac{2}{w}$ for any $w \geq 3$ and $\rho^{\infty}(T_6(2,2)) = \frac{3}{4}$.
\end{theorem}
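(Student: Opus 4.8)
The plan is to adapt the uniform-shift strategy used for the infinite triangular grid in the proof of Theorem~\ref{th.infinite_EVC_T3_T4_T6}. There, starting from the $2/3$-pattern of Theorem~\ref{th.vc_infiniteT6}, every attack on an edge $e=\{u,v\}$ (with the guard on $u$) is answered by translating \emph{all} guards by the unit vector $v-u$; since $e$ has one of only three possible directions and the $2/3$-pattern is invariant, up to translation, under each of these three translations, the new configuration is again a translate of the pattern, hence a vertex cover, and the guard on $u$ reaches $v$. On a finite grid the only thing that goes wrong is that a translation pushes some guards outside $T_6(h,w)$. To repair this I would declare the two topmost and the two bottommost rows \emph{reserved}, and take as initial configuration $S$ the $2/3$-pattern of Theorem~\ref{th.vc_infiniteT6} restricted to $T_6(h,w)$, augmented so that every column carries guards on its two top and its two bottom vertices and so that at most $4$ further guards sit near the four corners. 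The pattern meets each column in at most $\tfrac23 h$ vertices and the augmentation adds at most $4$ guards per column plus $4$ in total, so $|S|\le\tfrac23 hw+4w+4$; since $w\ge h$ this is at most $\bigl(\tfrac23+\tfrac4h+\tfrac4{h^2}\bigr)hw$, which is the claimed bound, and pinning down these explicit constants is routine.

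The defense strategy maintains the invariant that after every round $S$ is the union of the four reserved boundary rows, the corner guards, and some vertical translate of the $2/3$-pattern on the interior rows; such a set is a vertex cover of $T_6(h,w)$ because a translate of the pattern is a vertex cover of the infinite grid (hence of the interior), because the full boundary rows cover every edge meeting them, and because a reserved band of width two leaves just enough room for the interior pattern to flow into and out of the boundary as it is repeatedly translated, without ever creating two adjacent uncovered vertices across the seam (each column of $T_6$ is a path and two uncovered vertices of the pattern lying in the same column are three apart). An attack on a \emph{vertical} edge is answered exactly by a vertical translation of the interior, which covers the attacked edge and moves one guard of the attacked column into the reserved band. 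An attack on a \emph{horizontal} or \emph{diagonal} edge is answered by translating the interior vertically in all columns except the one or two columns incident to the attacked edge, where the guards are instead rearranged locally --- as in the proof of Theorem~\ref{th.finite_EVC_T3} --- so that the attacked edge becomes covered and the seam condition with the neighbouring, translated columns is re-established.

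The step I expect to be the main obstacle is precisely this local repair for horizontal and diagonal attacks, together with the behaviour at the four corners where two reserved bands meet: it forces a case analysis over the three edge directions crossed with the location of the attack (deep interior, adjacent to a reserved band, inside a corner region), and, because $T_6$ has degree $6$, each case requires checking more incident edges than the corresponding argument for the hexagonal grid. I expect the proof to consist mostly of this bookkeeping, although each individual verification that the invariant survives a given move is elementary.

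The two degenerate cases are handled separately, since for $h=2$ there is no room to reserve bands in the $y$-direction. For $T_6(2,w)$ with $w\ge 3$ I would instead reserve a bounded number of \emph{columns} at the two ends, run the analogous horizontal conveyor on the $2\times 3$ blocks in between, and bound $|S|$ by $\tfrac43 w+4=\bigl(\tfrac23+\tfrac2w\bigr)\,2w$. For $T_6(2,2)$ one checks directly that any three of its four vertices form a vertex cover, while no two of them form an \emph{eternal} one --- from a $2$-vertex cover, attacking an edge incident to the unique uncovered vertex forces the configuration out of being a vertex cover --- so that $\rho^{\infty}(T_6(2,2))=\tfrac34$.
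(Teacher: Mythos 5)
There is a genuine gap, and it sits exactly where you flag it: the ``local repair'' for horizontal and diagonal attacks. The difficulty is not mere bookkeeping; the conveyor-belt idea breaks structurally on the triangular grid. The $2/3$-pattern of Theorem~\ref{th.vc_infiniteT6} is, up to translation, $C_c=\{(x,y): x+y\not\equiv c \pmod 3\}$, and if two adjacent columns carry different vertical phases $c$ and $c'$, then either the horizontal edges between them (when $c'=c+1$) or the diagonal edges between them (when $c'=c+2$) acquire two uncovered endpoints. So the interior pattern can only be translated \emph{globally}: there is no per-column vertical adjustment available, and a horizontal or diagonal attack, which necessarily transfers a guard from column $x'$ to column $x'\pm 1$, cannot be absorbed by any combination of vertical translates. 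Your special columns must therefore leave the family of translates altogether, which falsifies the stated invariant (``boundary bands plus some vertical translate of the pattern on the interior'') after the very first such attack; moreover the guard transfer permanently changes the per-column guard counts, so the special columns cannot be realigned with their globally translated neighbours on a later move without importing or exporting guards through the reserved bands, which in turn violates the ``bands always full'' part of the invariant. Since eternality requires a closed, explicitly verified set of reachable configurations, and your strategy exits its own configuration class immediately, the main claim is not established. The same objection applies to the ``horizontal conveyor'' you sketch for $T_6(2,w)$.

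For contrast, the paper avoids the phase-alignment problem entirely by abandoning translation: it partitions $T_6(h,w)$ into $3\times 3$ tiles (plus a fully and permanently guarded stripe of width at most $2$), places $6$ of the $9$ vertices of each tile in one of two fixed configurations, and answers every attack by flipping \emph{all} tiles between the two configurations simultaneously, with a small catalogue of moves covering attacks inside a tile, between two tiles, and between a tile and the stripe. Because only two global states are ever visited, eternality is immediate once each attack type is checked. Your counting (including the reduction of $4w+4$ extra guards to the stated $\frac{4}{h}+\frac{4}{h^2}$ error term using $w\ge h$) and your treatment of $T_6(2,2)$ are fine, but to complete the proof along your lines you would need to replace the invariant by one that is actually preserved --- in effect, to specify a finite family of admissible configurations for the special columns and verify closure under all further attacks, which is the substance of the theorem rather than a routine verification.
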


\begin{proof}
It is clear that, when $h=w=2$, two vertices in the eternal vertex cover are not enough, so we have to take at least three, and they are always able to defend $T_6(2,2)$.

\medskip

When $w \geq h=2$, let $w=3w'+w''$, $w'' \in \{0, 1, 2 \}$ and let the vertices in $T_6(2,w)$ be at $y$-coordinates either $0$ or $1$ and at $x$-coordinates from $0$ to $w-1$.
Consider the following two vertex sets: \\
$S_1=\{ v \mbox{ at coordinates } (1+3i, 0) \mbox{ and } (2+3i, 0) \mbox{ and } (3i, 1) \mbox{ and } (1+3i, 1) , i=0, \ldots , w'-1\} \cup \{ v \mbox{ at coordinates } (3w'+j, 0) \mbox{ and } (3w'+j,1), j=0, \ldots , w''-1 \}$ (see Fig.~\ref{fig:tri grid 2w config 1})\\
and \\
$S_2=\{ v \mbox{ at coordinates } (3i,0) \mbox{ and } (2+3i, 0) \mbox{ and } (1+3i, 1) \mbox{ and } (2+3i, 1) , i=0, \ldots , w'-1\} \cup \{ v \mbox{ at coordinates } (3w'+j, 0) \mbox{ and } (3w'+j, 1), j=0, \ldots , w''-1 \}$ (see Fig.~\ref{fig:tri grid 2w config 2}).

\begin{figure}[ht]
    \centering
    
    \begin{minipage}{0.45\textwidth}
    \begin{minipage}{0.45\textwidth}
    \begin{minipage}{0.45\textwidth}
     \subfloat[\label{fig:tri grid 2w config 1}]{
    \includegraphics[scale=1]{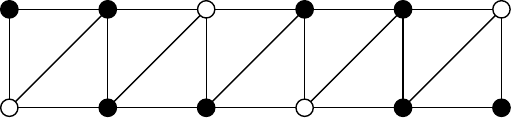}}
    \end{minipage}
    \\
    \begin{minipage}{0.45\textwidth}
    \subfloat[\label{fig:tri grid 2w config 2}]{
    \includegraphics[scale=1]{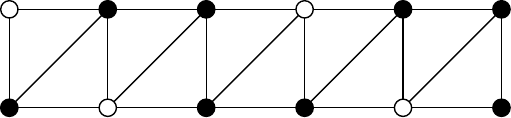}}
    \end{minipage}
    \end{minipage}\\
    \begin{minipage}{0.45\textwidth}
        \subfloat[\label{fig:tri grid tile Q}]{
        \includegraphics[scale=1]{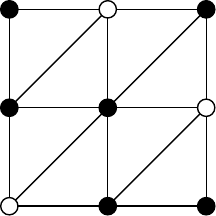}
        }
        ~~~~
        \subfloat[\label{fig:tri grid tile Z}]{
        \includegraphics[scale=1]{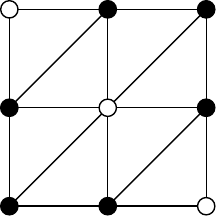}
    }
    \end{minipage}
    \end{minipage}
    ~
    \begin{minipage}{0.45\textwidth}
        \subfloat[\label{fig:big grid tile}]{
        \includegraphics[scale=.5]{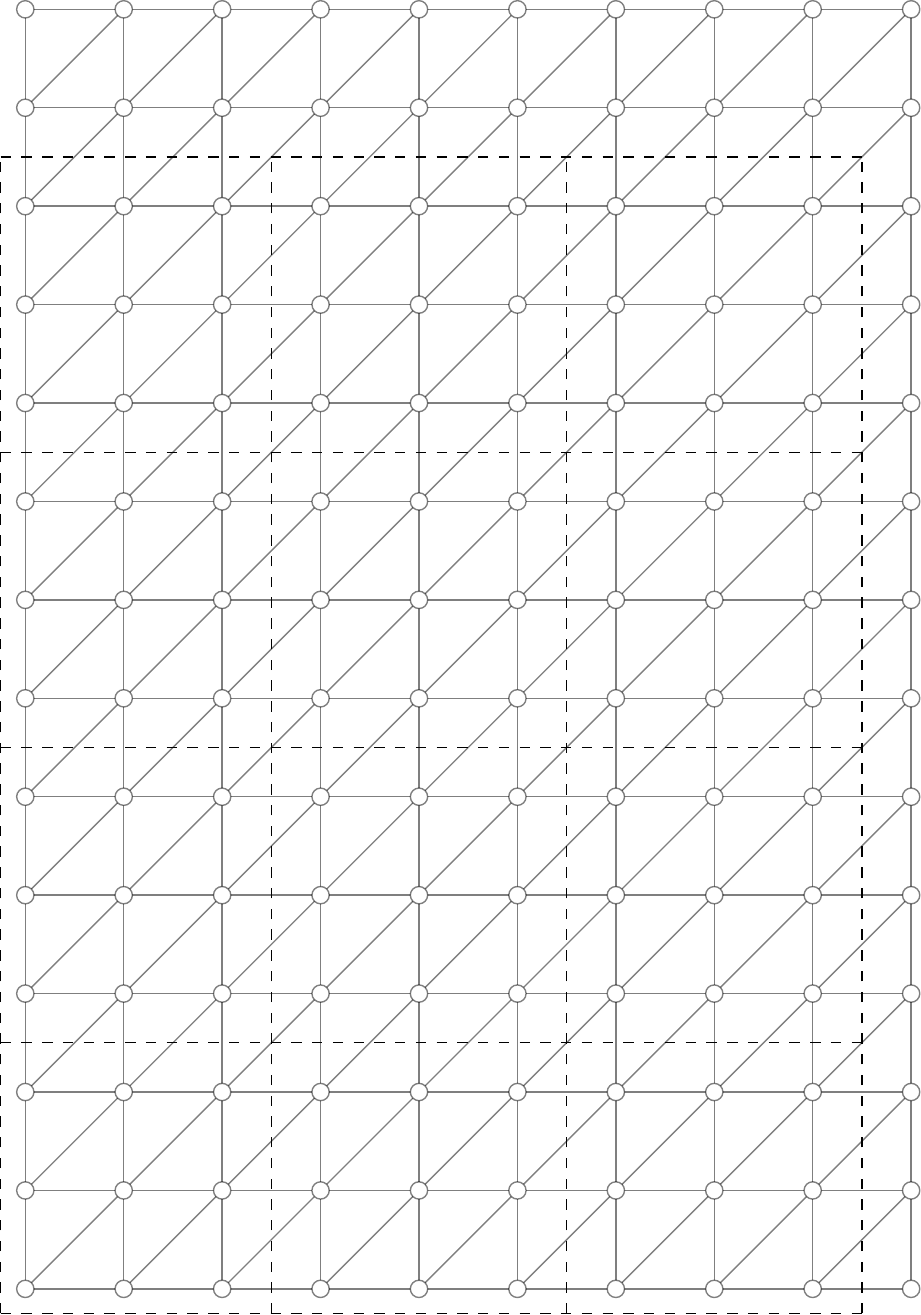}
    }
    \end{minipage}
    \caption{a. $S_1$ when $w\ge h=2$; b. $S_2$ when $w\ge h=2$; c. Tile with $S_1$ when $w,h\ge 3$; d. Tile with $S_2$ when $w,h\ge 3$; e. Partition of $T_6(h,w)$ into tiles plus a stripe.}
    \label{fig:tri grid tile}
\end{figure}

It is trivial to see that both $S_1$ and $S_2$ are vertex covers; moreover, Fig.~\ref{fig:tri 2,w evc} shows different ways to shift guards on the vertices of $S_1$ in order to move them on the vertices of $S_2$.
Note that, for every edge $e$ in $T_6(2,w)$ protected by a single guard, there exists at least one possibility, among those depicted in Fig.~\ref{fig:tri 2,w evc}, shifting a guard along $e$.
Moreover, by simply reversing the direction of the shifts in Fig.~\ref{fig:tri 2,w evc} we get different ways to shift guards on the vertices of $S_2$ in order to move them on the vertices of $S_1$.
In order not to overburden the exposition, we omit the mathematical details, that can be easily deduced by Fig.~\ref{fig:tri 2,w evc}.
It follows that both $S_1$ and $S_2$ are eternal vertex covers for $T_6(2,w)$ and Fig.~\ref{fig:tri 2,w evc} provides a defense strategy transforming one in the other one guaranteeing the possibility to defend an infinite sequence of attacks.

\begin{figure}[ht]
    \centering
    
    \begin{minipage}{0.45\textwidth}
    \includegraphics[width=1\textwidth]{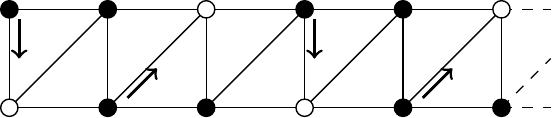}\vspace{5pt}%
    \\
    \includegraphics[width=1\textwidth]{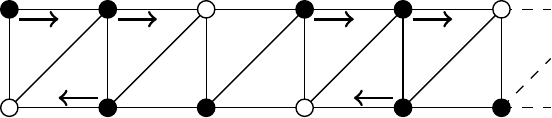}\vspace{5pt}%
    \\
    \includegraphics[width=1\textwidth]{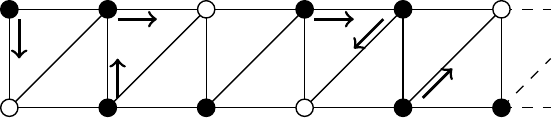}
    \end{minipage}
    ~~
    \begin{minipage}{0.45\textwidth}
    \includegraphics[width=1\textwidth]{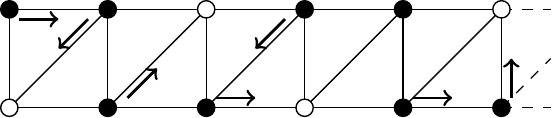}\vspace{5pt}%
    \\
    \includegraphics[width=1\textwidth]{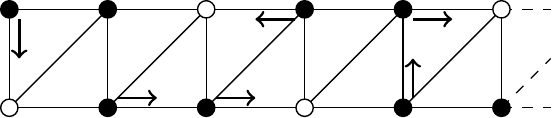}\vspace{5pt}%
    \\
    \includegraphics[width=1\textwidth]{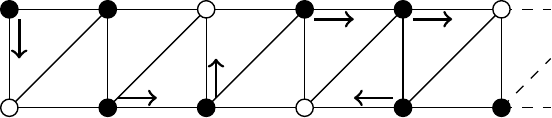}
    \end{minipage}
    
    \caption{Defense strategies for $T_6(2,w)$.}
    \label{fig:tri 2,w evc}
\end{figure}

We conclude this part of proof by computing $\rho^{\infty}(T_6(2,w))$.
$|S_1|=|S_2|=4w'+2w''$ while the number of vertices of $T_6(2,w)$ is $2w$.
Hence, $\rho^{\infty}(T_6(2,w))=
\frac{4w'}{6w'+2w''}+\frac{2w''}{2w}$.
Since $0 \leq w'' \leq 2$ this quantity is upper bounded by $\frac{2}{3}+\frac{2}{w}$.

\medskip

In the general case $w \geq h \geq 3$,
let $h=3h'+h''$ and $w=3w'+w''$. 
So, $T_6(h,w)$ can be partitioned into $h'w'$ $3 \times 3$ tiles plus a stripe at most 2 wide that we can imagine to lie on the upper and rightmost side of the rectangle (see Figure~\ref{fig:big grid tile}).
We now construct two vertex covers $S_1$ and $S_2$ and prove it is possible to shift one into the other one defending every possibly attacked edge, so showing they are eternal for $T_6(h,w)$. 

For each $3 \times 3$ tile, put inside $S_1$ (respectively $S_2$) 6 out of the 9 vertices, all in the same position, as shown in Figure~\ref{fig:tri grid tile Q} (respectively~\ref{fig:tri grid tile Z}).

All the vertices of the stripe are put both in $S_1$ and in $S_2$.

We now consider every possible attacked edge and, for each, we describe a global defense strategy shifting either $S_1$ in $S_2$ or vice-versa. 
More in detail, whichever is the attacked edge, we shift every guard inside every $3 \times 3$ tile from configuration of Figure~\ref{fig:tri grid tile Q} to the one of Figure~\ref{fig:tri grid tile Z} or vice-versa.

Assume to start from guards placed on vertices of $S_1$, first.

If the attacked edge $e$ is inside a tile, look for the picture in Figure~\ref{fig:tri grid 3x3 atk inside tile QZ} with an arrow in correspondence of edge $e$ and use it as global strategy to shift the guards lying on vertices of the cover inside each tile from $S_1$ to $S_2$.

\begin{figure}[ht]
    \centering
    \centering
    \subfloat[\label{fig:tri grid 3x3 atk inside tile QZ}]{
    \begin{minipage}{0.16\textwidth}
    \includegraphics[width=0.9\textwidth]{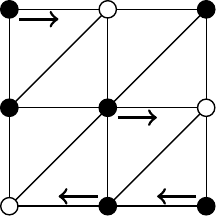}
    \end{minipage}
    \begin{minipage}{0.16\textwidth}
    \includegraphics[width=0.9\textwidth]{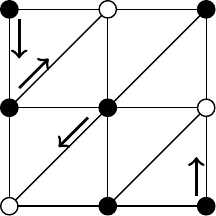}
    \end{minipage}
    \begin{minipage}{0.16\textwidth}
    \includegraphics[width=0.9\textwidth]{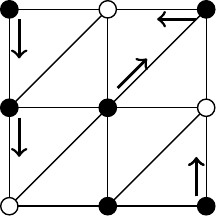}
    \end{minipage}
    \begin{minipage}{0.16\textwidth}
    \includegraphics[width=0.9\textwidth]{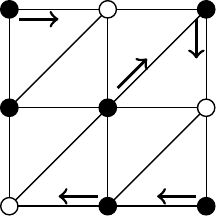}
    \end{minipage}
    \begin{minipage}{0.16\textwidth}
    \includegraphics[width=0.9\textwidth]{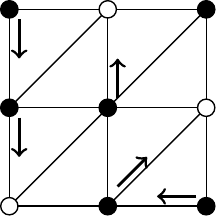}
    \end{minipage}
    }
    \\\hfill
    \subfloat[\label{fig:tri grid 3x3 atk inside tile ZQ}]{
    \begin{minipage}{0.16\textwidth}
    \includegraphics[width=0.9\textwidth]{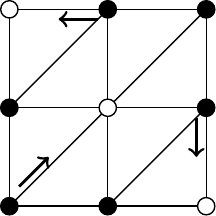}
    \end{minipage}
    \begin{minipage}{0.16\textwidth}
    \includegraphics[width=0.9\textwidth]{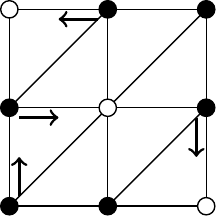}
    \end{minipage}
    \begin{minipage}{0.16\textwidth}
    \includegraphics[width=0.9\textwidth]{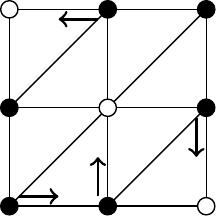}
    \end{minipage}
    \begin{minipage}{0.16\textwidth}
    \includegraphics[width=0.9\textwidth]{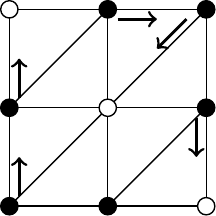}
    \end{minipage}
    \begin{minipage}{0.16\textwidth}
    \includegraphics[width=0.9\textwidth]{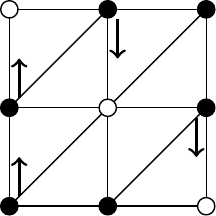}
    \end{minipage}
    \begin{minipage}{0.16\textwidth}
    \includegraphics[width=0.9\textwidth]{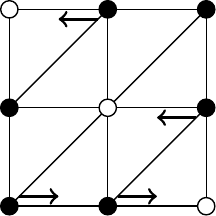}
    \end{minipage}
}
    \caption{a. Shift of guards inside a tile from configuration of Figure~\ref{fig:tri grid tile Q} to the one of Figure~\ref{fig:tri grid tile Z}; b. Shift of guards inside a tile from configuration of Figure~\ref{fig:tri grid tile Z} to the one of Figure~\ref{fig:tri grid tile Q}.}
    \label{fig:tri grid tile pairs attacks}
\end{figure}

If, instead, the attacked edge is vertical and connects two adjacent tiles $t_1$ and $t_2$, then Figure~\ref{fig:tri grid tile pairs QZ} shows how to exchange two guards between $t_1$ and $t_2$ along edges connecting them, and how to move the remaining guards inside $t_1$ and $t_2$; the defense strategy consists in adopting this move while all the other tiles follow anyone among the pictures in Figure~\ref{fig:tri grid 3x3 atk inside tile QZ} to globally shift again from $S_1$ to $S_2$.
If the attacked edge is horizontal, we can use the same figures rotated by 90 degrees, since both configurations in Figure~\ref{fig:tri grid tile Q} and~\ref{fig:tri grid tile Z} are symmetrical w.r.t.\ their diagonal.

\begin{figure}[ht]
    \centering
    \subfloat[\label{fig:tri grid tile pairs QZ}]{
    \begin{minipage}{0.16\textwidth}
\centering
\includegraphics[width=1\textwidth]{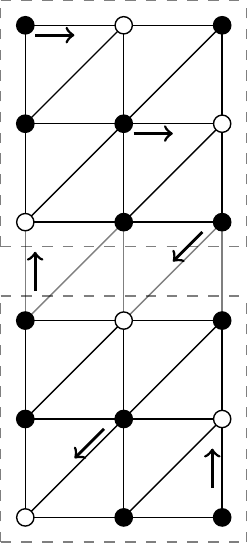}
\end{minipage}~~
\begin{minipage}{0.16\textwidth}
\centering
\includegraphics[width=1\textwidth]{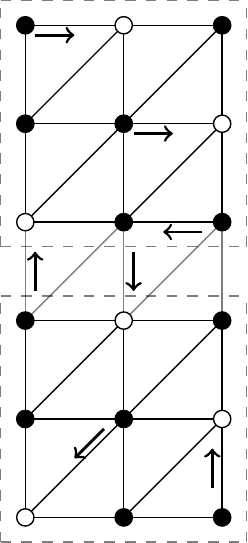}
\end{minipage}~~
    }
    ~~~~
    \subfloat[\label{fig:tri grid tile pairs ZQ}]{
   \begin{minipage}{0.16\textwidth}
\centering
\includegraphics[width=1\textwidth]{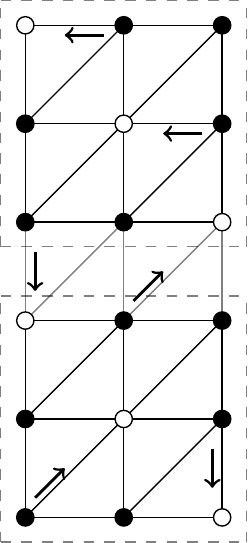}
\end{minipage}~~
\begin{minipage}{0.16\textwidth}
\centering
\includegraphics[width=1\textwidth]{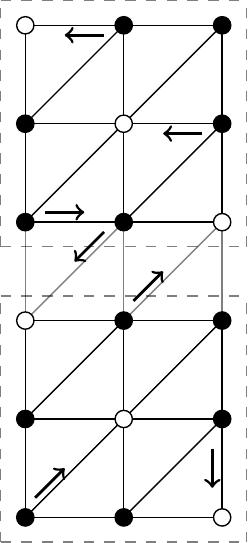}
\end{minipage}~~
}
    \caption{Shift when the attacked edge connects two adjacent tiles.
    a. Shift of guards from configuration of Figure~\ref{fig:tri grid tile Q} to the one of Figure~\ref{fig:tri grid tile Z}; b. Shift of guards from configuration of Figure~\ref{fig:tri grid tile Z} to the one of Figure~\ref{fig:tri grid tile Q}.}
    \label{fig:tri grid tile pairs}
\end{figure}

If, finally, the attacked edge connects a tile with the stripe, Figure~\ref{fig:tri tile stripe attack Q} shows how to shift some guards in order to keep the invariant that all the vertices of the stripe have a guard.

\begin{figure}[ht]
    \centering
    
    \subfloat[\label{fig:tri tile stripe attack Q}]{
    \centering
    \begin{minipage}{0.24\textwidth}
    \centering
    \includegraphics[width=.8\textwidth]{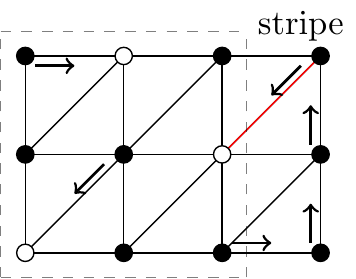}
    \end{minipage}
    \hspace{-20pt}
    \begin{minipage}{0.24\textwidth}
    \centering
    \includegraphics[width=.8\textwidth]{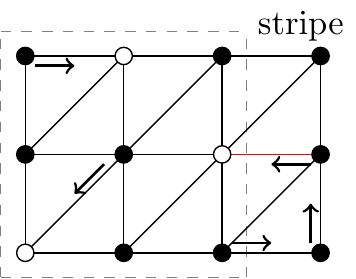}
    \end{minipage}
    }
    \subfloat[\label{fig:tri tile stripe attack Z}]{
    \centering
    \begin{minipage}{0.24\textwidth}
    \centering
    \includegraphics[width=.8\textwidth]{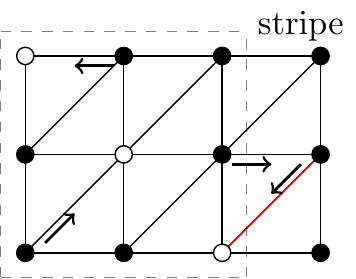}
    \end{minipage}
    \hspace{-20pt}
    \begin{minipage}{0.24\textwidth}
    \centering
    \includegraphics[width=.8\textwidth]{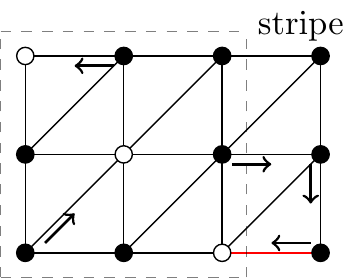}
    \end{minipage}
    }
    \caption{Shift when the attacked edge connects a tile with the stripe.
    a. Shift of guards from configuration of Figure~\ref{fig:tri grid tile Q} to the one of Figure~\ref{fig:tri grid tile Z}; b. Shift of guards from configuration of Figure~\ref{fig:tri grid tile Z} to the one of Figure~\ref{fig:tri grid tile Q}.}
    \label{fig:tri tile stripe attack}
\end{figure}

If guards are initially placed on vertices of $S_2$, the reasonings are identical but we exploit Figures~\ref{fig:tri grid 3x3 atk inside tile ZQ}, \ref{fig:tri grid tile pairs ZQ} and Figure~\ref{fig:tri tile stripe attack Z}.

Analogously to the case $h=2$, we deduce that both $S_1$ and $S_2$ are eternal vertex covers, and the correctness descends from shifting one set into the other and vice-versa.

Again, in order not to overburden the exposition, we choose not to mathematically detail the shift function but we simply refer to the figures.

It remains to evaluate $\rho^{\infty}(T_6(h,w))$.
$|S_1|=|S_2|= \frac{2}{3}(9w'h')+3h'w''+3w'h''+h''w''$ and the number of vertices of $T_6(h,w) =hw \geq 9 h'w'$.
Since $0 \leq h'',w'' \leq 2$, we have that
$\rho^{\infty}(T_6(h,w)) \leq 
\frac{2}{3}+\frac{2}{w}+\frac{2}{h} + \frac{4}{hw}$.
From the inequality $w \geq h$ we deduce that $\rho^{\infty}(T_6(h,w))$ is upper bounded by $\frac{2}{3}+\frac{4}{h}+\frac{4}{h^2}$.
\end{proof}

\subsection{Finite Octagonal Grids}
As in the previous cases, is not possible to exploit the method based on the Hamiltionan cycle.
Hence, we propose a defense strategy that requires guards on approximately $3/4$ of the vertices.
The idea of our strategy is to move only a subset of guards, placed on the vertices of two consecutive columns, leaving all the other guards fixed. 

\begin{theorem}
\label{th.finite_EVC_T8}
$\rho^{\infty}(T_8(h,w)) \le \frac{3}{4} + \frac{1}{2h} + \frac{1}{4h^2}$, $\forall\; w \geq h \geq 2$.
\end{theorem}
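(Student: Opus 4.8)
The plan is to follow the same template as the proof of Theorem~\ref{th.finite_EVC_T3}: fix once and for all a static reference configuration $S$ whose density is $3/4$ up to an additive $O(1/h)$ error, and then argue that every attack can be repelled by a shift that moves only guards lying in the (at most two) columns incident to the attacked edge, leaving every other column frozen. For $S$ I would start from the optimal cover of the infinite octagonal grid furnished by Theorem~\ref{th.vc_infiniteT8} --- all vertices of the odd-$x$ columns (the \emph{full} columns) together with alternate vertices of the even-$x$ columns (the \emph{light} columns) --- restricted to $T_8(h,w)$, with the parity of the light columns chosen so that their guards occupy the rows adjacent to the two horizontal borders. To this base pattern I would then add a bounded, $O(w)$-sized set of extra guards near the top and bottom rows whose sole purpose is to make the column-local moves below feasible; this augmentation is exactly what produces the additive $\frac1{2h}+\frac1{4h^2}$ on top of $\frac34$.

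Next I would set up the invariant. Let $\mathcal F$ be the finite family of ``standard'' covers obtained from the base pattern by choosing, independently for each light column, one of its two alternating sub-patterns, together with a small finite set of admissible local states near the two borders. Every member of $\mathcal F$ has the same cardinality as $S$ and is a vertex cover of $T_8(h,w)$ --- an empty vertex of a light column is adjacent only to full-column vertices or to opposite-parity vertices of the same light column, so the complement of the cover is independent in the king graph --- and, crucially, each of the moves described below maps a member of $\mathcal F$ to another member of $\mathcal F$; hence $S$ is an eternal vertex cover and $\rho^{\infty}(T_8(h,w))\le |S|/(hw)$.

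The heart of the argument is a case analysis on the attacked edge $e=\{(x',y'),(x'',y'')\}$, with a guard on $(x',y')$ and none on $(x'',y'')$. Since $T_8$ has only horizontal, vertical, and the two diagonal edge types and $|x''-x'|\le 1$, the edge meets only the columns $x'$ and $x''$. If $e$ is a vertical edge inside a light column, the defense is simply ``slide every guard of that column up (respectively down) by one'', which swaps the two sub-patterns of that column and carries the attacked guard along $e$. If $e$ joins a full column $i$ to a light column $i\pm 1$ (horizontally or diagonally), the forced value $f((x',y'))=(x'',y'')$ punches a hole in the full column; I would repair it by rotating guards along a closed walk confined to the strip formed by columns $i$ and $i\pm 1$, the pre-installed border guards being precisely what allows this rotation to close up without spilling into a third column. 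In each case one verifies that the outcome lies in $\mathcal F$ --- in particular that no two empty vertices become adjacent --- and, as in the proof for $T_6$, I would display these finitely many shifts pictorially and omit the purely mechanical checks.

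Finally I would count. The base pattern has about $\frac12 w$ full columns (of $h$ guards) and about $\frac12 w$ light columns (of about $h/2$ guards), i.e. $\frac34 hw + O(w)$ guards, and the border augmentation adds only $O(w)$ more; a short computation then yields $|S|\le \frac34 hw+\frac w2+\frac{w}{4h}+O(1)$, so that dividing by $hw$ and using the hypothesis $w\ge h$ to absorb the leftover $1/w$-terms into $1/h$-terms gives $\rho^{\infty}(T_8(h,w))\le \frac34+\frac1{2h}+\frac1{4h^2}$. I expect the genuine difficulty to be the border bookkeeping in the third paragraph: unlike in Theorem~\ref{th.finite_EVC_T3}, the moving guards are not allowed to overflow into a neighbouring column, so the extra guards must be designed so that the hole created in a full column by a horizontal or diagonal attack can always be re-absorbed inside the two active columns, and then closure of $\mathcal F$ must be checked for each of the finitely many combinations of edge type and border state.
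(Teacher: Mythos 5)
Your proposal follows essentially the same route as the paper's proof: the same static configuration (full odd columns plus alternating guards on even columns, with the $O(w)$ rounding-up on the light columns accounting for the additive $\frac{1}{2h}+\frac{1}{4h^2}$), the same invariant (full columns stay full, light columns keep at least an alternating pattern), the same defense by a rotation of guards confined to the two columns meeting the attacked edge, and the same final count using $w\ge h$. The paper likewise resolves the boundary bookkeeping you flag by a cyclic shift through the adjacent full column (sets $S_A,\dots,S_D$ and the guard at $(x'',0)$ stepping left), so your plan is correct and matches the published argument.
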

\begin{proof}
Let \(S\) the subset of vertices of $T_8(h,w)$ in which the guards are placed and let it include all vertices at odd $x$-coordinates and alternated vertices at even $x$-coordinates (not necessarily aligned at $y$-coordinates).
Note that on alternated columns we place $\lceil \frac{h}{2}\rceil$ guards and not $\lfloor \frac{h}{2} \rfloor$ in order to have enough guards during a defense.
Hence $|S|=\lfloor \frac{w}{2} \rfloor h + \lceil \frac{w}{2} \rceil \lceil \frac{h}{2} \rceil \leq \frac{3}{4} hw + \frac{h}{4} + \frac{w}{4} +\frac{1}{4}$, and in the hypothesis that $w\ge h$, $\rho^\infty (T_8(h,w)) \le \frac{3}{4} + \frac{1}{2h} + \frac{1}{4h^2}$.

Let us now consider the possible attacks and the strategies to apply.
The underlying idea of our strategy is to shift only a subset of guards, placed on vertices of two consecutive columns while leaving all the other guards fixed. As an invariant of the strategy, after every attack and defense, we will always have guards on every vertex at odd $x$-coordinate and at least on alternated vertices at even $x$-coordinates. This will guarantee the correctness of our strategy.

Let the attacked edge be \(\{(x',y'),(x'',y'')\}\), 
and let a guard lie on $(x',y')$; since there is no guard on $(x'', y'')$, $x''$ is even. 
W.l.o.g., we assume that $x''$ is either $x'$ or $x'+1$ and  $y''$ is either $y'$ or $y'+1$ (see Fig.~\ref{fig:octa_grid_attack}). 
In fact we omit the cases in which either $y''$ is equal to $y'-1$ or $x''$ is equal to $x'-1$, that can be handled symmetrically. 
Let us now focus on the subgraph of $T_8(h,w)$ induced by the vertices at $x$-coordinates $x'' -1$ and $x''$, and let the $y$-coordinate of the lower vertices of this subgraph be $0$. Refer to Fig.~\ref{fig:octa_grid_attack}.

\begin{figure}[ht]
\centering
\subfloat[\label{fig:octa_grid_attack obliquo}]{
\includegraphics[scale=0.8]{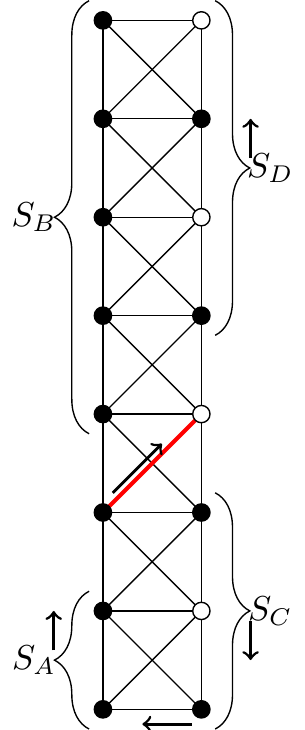}
}~
\subfloat[\label{fig:octa_grid_attack orizzontale}]{
\includegraphics[scale=0.8]{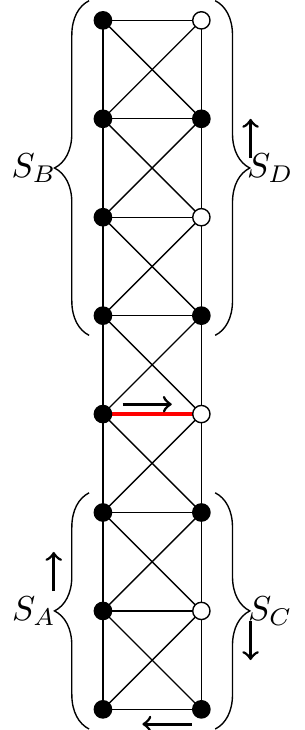}
}~
\subfloat[\label{fig:octa_grid_attack verticale 1}]{
\includegraphics[scale=0.8]{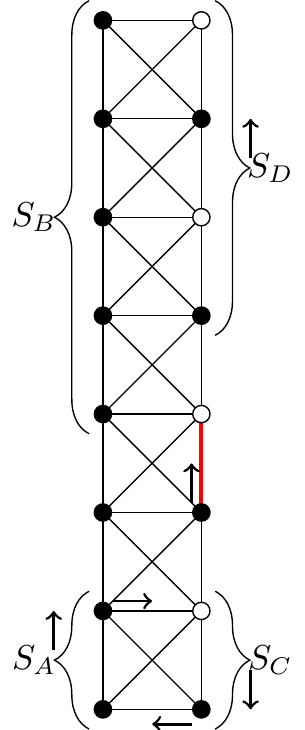}
}
\caption{
Possible cases in the proof of Theorem~\ref{th.finite_EVC_T8}.}
    \label{fig:octa_grid_attack}
\end{figure}

We can define the four (possibly empty) sets constituted by the vertices of this subgraph: \\
\(S_A = \{ (x''-1,0), \ldots, (x''-1, y'-1) \}\), \\
\(S_B = \{ (x''-1,y'+1), \ldots, (x''-1, h-1) \}\); \\ 
\(S_C = \{ (x'',0), \ldots, (x'', y'-1) \}\); if the attacked edge is of kind \(\{(x',y'),(x'+1,y'+1)\}\), then we add to set \(S_C\) also vertex \((x'',y')\); \\
\(S_D = \{ (x'',y''+1), \ldots, (x'', h-1) \}\).

Our strategy consists in shifting the guards on sets $S_A$, $S_C$, $S_D$ and few further guards.

Since at even $x$-coordinates there are $\lceil h/2 \rceil$ alternated guards, at least one vertex between $(x'',0)$ and $(x'', h-1)$ contains a guard; w.l.o.g.\ let it be $(x'',0)$ (otherwise we can rotate the layout of the graph w.r.t.\ its horizontal symmetry axis). 

Our defense strategy works as follows:
\begin{itemize}
    \item guard on $(x',y')$ moves to $(x'', y'')$ defending the attacked edge: $f((x',y'))=(x'', y'')$;
    \item all the guards lying on a vertex of $S_B$ remain where they are: $f((x''-1,i))=(x''-1,i)$ $\forall i=y'+1, \ldots, h-1$;
    \item all the guards lying on a vertex of $S_D$ move up: if there is a guard on \((x'',i)\) then $f((x'',i))=(x'', i+1)$ $\forall i=y''+1, \ldots, h-2$; 
    \item all the guards lying on a vertex of $S_C$ (except $(x'',0)$) shift down: if there is a guard on \((x'',i)\) then $f((x'',i))=(x'', i-1)$ $\forall i=1, \ldots, y'-1$; moreover, if the attacked edge is of kind $\{(x',y'),(x'+1,y'+1)\}$ then $f((x'',y'))=(x'',y'-1)$. Note that, at moment, both $(x'',0)$ and $(x'', 1)$ contain a guard, even if they are adjacent and have the same even $x$-coordinate;

    \item if the attacked edge is of kind $\{(x',y'),(x',y'+1)\}$, vertices $(x',y'-1)$ and $(x', y'-2)$ do not contain a guard even if they are adjacent; 
    thus guard on $(x'-1,y'-1)$ moves to its right: $f((x'-1,y'-1))=(x', y'-1)$;

    \item all the guards lying on a vertex of $S_A$ shift up:
    if there is a guard on \((x''-1,i)\) then $f((x''-1,i))=(x''-1, i+1)$ $\forall i=0, \ldots, y'-1$; note that at moment vertex $(x''-1,0)$ is not covered by any guard;
    \item guard on $(x'',0)$ shifts to its left: $f((x'',0))=(x''-1, 0)$.
    
    \item if the attacked edge is of kind $\{(x',y'), (x'-1, y'+1)\}$, we can horizontally flip $T_8(h,w)$ and get again an attacked edge of kind $\{(x',y'),(x'+1, y'+1)\}$.
\end{itemize}

Observe that, after these guards' shifts, column $x''-1$ still has guards on all its vertices, and column $x''$ still has guards at least on alternating vertices, so the invariant is kept.
\end{proof}

\section{Conclusions}

In this paper, we studied the vertex cover and eternal vertex cover numbers on regular grid graphs, both in the infinite and finite cases.
To this aim, we extended the concepts of minimum vertex cover and minimum eternal vertex cover sets to infinite grid graphs.

This study arose from observing that paths and square grids behave differently w.r.t.\ the eternal vertex cover number: this parameter is rather different for paths in the finite and infinite cases, while it is very similar for squared grids in the two cases.

We performed a systematic study for all regular grids, {\em i.e.} squared, hexagonal, triangular and octagonal grids.
The results are summarized in Tables~\ref{tab:infinite} and~\ref{tab:finite}.

The conclusion of this study is that having a not null second dimension helps for an effective defense strategy, indeed path remains the only graph for which $\rho^{\infty}$ is completely different when passing from the infinite to the finite cases.

\clearpage

\bibliographystyle{elsarticle-num}
\bibliography{references}
\end{document}